\documentclass[runningheads]{llncs}
\usepackage[T1]{fontenc}

\usepackage{amsmath, amsthm, amssymb, graphicx, xcolor, soul, listings, enumerate, url}
\newtheorem*{maintheorem}{Theorem \ref{singleDetLem}}

\begin{document}
\title{Factorisability of Low Dimensional Non-Negative Integer Matrices}

\author{Paul C.~Bell\inst{1}\orcidID{0000-0003-2620-635X} \and
Eva Foster \inst{2}\orcidID{0009-0002-7780-7331} \and
Daniel Reidenbach\inst{2}\orcidID{0000-0001-7996-5291} \and Pavel Semukhin\inst{1}\orcidID{0000-0002-7547-6391}}
\authorrunning{P.~C.~Bell et al.}
\institute{School of Computer Science and Mathematics, Liverpool John Moores University, UK \email{p.c.bell@ljmu.ac.uk}, \email{p.semukhin@ljmu.ac.uk}\and
School of Computing and Mathematical Sciences, Birkbeck, University of London, UK \email{efoste07@student.bbk.ac.uk}, \email{d.reidenbach@bbk.ac.uk}}

\maketitle
\begin{abstract}
We consider the problem of determining if a given two-dime\-nsional nonnegative integer matrix \(M\) is the product of two such matrices, excluding trivial units. A matrix \(M\) with no such factorisation is called \emph{prime} and therefore belongs to the minimal (infinite rank) generator of $2 \times 2$ matrices over the natural numbers, otherwise it is called composite. We also consider the problem of finding a (non-unique) factorisation of a composite matrix. Our results have applications in computational group theory and the theory of codes, where such matrices are called \emph{incidence matrices}. We analyse the complexity of primality and finding a factorisation for a composite matrix, providing a first efficient algorithm.
\keywords{Matrix factorisation \and Matrix semigroups \and  Nonnegative integer matrices \and Incidence matrices \and Smith normal form.} 
\end{abstract}

\section{Introduction -- Primality of nonnegative integer matrices}

In this paper we study the semigroup of nonnegative integer matrices and our main consideration is the problem of determining for a given nonsingular matrix \(A \in \mathbb{N}^{n \times n}\) if there exist \(B, C \in \mathbb{N}^{n \times n} \setminus \Pi_n\) such that \(A = BC\), where \(\Pi_n\) is the set of \emph{units} or invertible elements within \(\mathbb{N}^{n \times n}\). If there do not exist such \(B, C\), then we call \(A\) prime, otherwise it is composite\footnote{As we later show, singular matrices are always composite.}. In the case of a composite matrix, we also study the problem of finding a factorisation. Such factorisations are not unique in general. Note that we assume \(\mathbb{N} = {0, 1, 2, \ldots}\) throughout, i.e., \(0\) is a natural number. We thus see that prime matrices appear somewhat like `primes' within the structure of \(\mathbb{N}^{n \times n}\). 

The problem of determining if a given integer is prime was shown to be decidable in polynomial time in a celebrated paper of Agrawal, Kayal, and Saxena \cite{agrawal2004primes}. The techniques used seem to not be transferable to our setting of factorisation over \(\mathbb{N}^{2 \times 2}\) since we have a non-abelian structure. The rationale for ignoring units \(\Pi_n\) in our setting is clear; otherwise every matrix would be composite (similar to ignoring \(\pm 1\) in factorisations over the integers).

Our primary concern in this paper will be the semigroup \(\mathbb{N}^{2 \times 2}\); we therefore note that \(\Pi_2 = \{J, I\}\), where \(J = \big(\begin{smallmatrix} 0 & 1 \\ 1 & 0 \end{smallmatrix}\big)\) and \(I\) is the \(2 \times 2\) identity matrix. It is not difficult to see that the elements of \(\Pi_2\) are the only units within \(\mathbb{N}^{2 \times 2}\) \cite{BCH22}. Consider the following motivating example. We highlight the determinant of the given factorisation since it plays such an important part of our later analysis.

\begin{example} \label{newEx}
Let \(A = \big(\begin{smallmatrix}  1382 & 1243 \\ 1045 & 1316 \end{smallmatrix}\big)\), \(B = \big(\begin{smallmatrix}  26 & 11 \\ 7 & 31 \end{smallmatrix}\big)\), and \(C = \big(\begin{smallmatrix}  43 & 33 \\ 24 & 35 \end{smallmatrix}\big)\). We observe that \(\det(B) = 3^6\) and \(\det(C) = 23\cdot 31\). Moreover, \(A = BC\) and thus \(A\) is composite. One may computationally verify that \(B\) is prime, but \(C\) has several factorisations, e.g., \(C = \big(\begin{smallmatrix}  20 & 1 \\ 9 & 2 \end{smallmatrix}\big) \cdot \big(\begin{smallmatrix}  2 & 1 \\ 3 & 13 \end{smallmatrix}\big) = \big(\begin{smallmatrix}  16 & 1 \\ 1 & 2 \end{smallmatrix}\big) \cdot \big(\begin{smallmatrix}  1 & 0 \\ 5 & 1 \end{smallmatrix}\big) \cdot \big(\begin{smallmatrix}  2 & 1 \\ 1 & 12 \end{smallmatrix}\big)\).
\end{example}

Our aim then is to consider the problem of determining whether some matrix such as \(A\) in Example~\ref{newEx} above, is prime (primality) and finding a factorisation if it is composite. Clearly the primality problem belongs to co-NP, since we can trivially guess the factorisation of \(A\) in linear time. We note that related problems have been studied before and some partial results are known. Notably, it was recently shown that given a finite set of matrices \(\mathcal{G}\), determining if there exists a nonnegative matrix \(M \in \langle \mathcal{G} \rangle\) is decidable, subject to Schanuel's conjecture, if all matrices within \(\mathcal{G}\) commute, but undecidable in general \cite{CO24}. Note that \(\langle \mathcal{G} \rangle\) denotes the semigroup of matrices generated by \(\mathcal{G}\). More relevant to our study, a characterisation was recently given of some basic infinite families of prime matrices within \(\mathbb{N}^{2 \times 2}\) \cite{BCH22,Heil23}, although there such matrices are called \emph{atoms}. As the authors there note, the structure of this semigroup is far from trivial and a more general classification seems difficult. The classification given there provides some necessary conditions for determining if an element is prime and several infinite families of primes that we will recap in the next section. Note that the related and well-studied \emph{Nonnegative Matrix Factorisation (NMF) problem} has a long history, with connections to quantum mechanics, probability theory, polyhedral combinatorics, etc. \cite{AG12}, however NMF is concerned with \emph{real} matrices and therefore is not directly applicable to our setting.

\subsection{Related results on matrix semigroups}

There has been a great deal of interest recently in identifying reachability and factorisation properties of low-dimensional matrix semigroups. These problems are directly related to the automated verification of systems, with connections to linear recurrence sequences, loop termination \cite{KL22,OW15}, formal power series, and weighted automata \cite{BalleM15,Halava2004}. 

Many such problems are undecidable starting from dimension $3$, with several open problems at dimension $2$. The general setting of such problems is that we are given some (usually finite) \emph{generator} \(\mathcal{G} \subseteq \mathcal{F}^{n \times n}\) for some semiring \(\mathcal{F}\) in dimension \(n \geq 1\), and then consider questions about the semigroup \(\mathcal{S}\) \emph{generated} by \(\mathcal{G}\), denoted \(\mathcal{S} = \langle\mathcal{G}\rangle\). A fundamental problem is \emph{membership}: given \(\mathcal{G} \subseteq \mathcal{F}^{n \times n}\) and a target matrix \(M \in \mathcal{F}^{n \times n}\), determine if \(M \in \langle \mathcal{G} \rangle\). The membership problem was shown undecidable by Markov in the 1940s \cite{Ma47} and there have been a number of  advances for variants of this problem \cite{BP10,BHP24,BT97b,Pa70}.

Another related problem is the \emph{matrix freeness} problem: given a finite set of matrices \(\mathcal{G} \subseteq \mathbb{Z}^{n \times n}\), is every matrix within \(\langle \mathcal{G} \rangle\) uniquely factorisable over \(\mathcal{G}\)? This problem is known to be undecidable in dimension three, even for upper-triangular matrices \cite{CHK99,BT97b}, see \cite{CN12} for a good survey. Note that nonnegative matrices in particular are studied in relation to weighted automata \cite{Halava2004}, coding theory \cite{Reu09}, and machine learning \cite{NMFbook}. A recent work considered the \emph{incidence matrix} of a given morphism from the free monoid into itself, characterising all upper-triangular incidence matrices for commutative binary morphisms \cite{Honkala24}.

\subsection{Our contribution}

We note that the main decision problem that we consider (primality) is trivially decidable. Since our matrices are nonnegative, any product of such matrices will grow in size for any reasonable submultiplicative matrix norm. Thus we can use brute force to determine if some matrix is prime, or else to find a factorisation if composite. As we show however, such a brute force technique is inherently slow. 

We provide a more efficient algorithm. We initially proceed by identifying some infinite classes of matrices in \(\mathbb{N}^{2 \times 2}\) which are composite. For the remaining matrices in \(\mathbb{N}^{2 \times 2}\), we  utilise the Smith normal form of the matrices to derive some partial information about potential factorisations.

If we are given \(A = \big(\begin{smallmatrix} a_1 & a_2 \\ a_3 & a_4 \end{smallmatrix}\big) \in \mathbb{N}^{2 \times 2}\) and want to determine if there exist \(B, C \in \mathbb{N}^{2 \times 2} \setminus \{J, I\}\), where \(J = \big(\begin{smallmatrix} 0 & 1 \\ 1 & 0 \end{smallmatrix}\big)\), such that \(A = BC\), then we may note that we essentially have four unknowns in the elements of matrices \(C\); once such a matrix is fixed then we can determine if a matrix \(B \in \mathbb{N}^{2 \times 2}\) exists, satisfying \(A = BC\). Since we know \(\det(A)\), we might nondeterministically guess the determinant \(\det(C)\) (and thus \(\det(B)\)), which then reduces the number of unknowns to three for any such guess. Each of these unknowns is bounded by the maximal element of matrix \(A\). A brute force approach would now require \(\tilde{O}((\max_{1 \leq j \leq 4}{a_j})^3)\) time complexity to determine if \(A\) is prime and to find a factorisation if composite (\(\tilde{O}\) notation is explained in Definition~\ref{def:softO}), see Proposition~\ref{bruteForceProp}. The derivation of the Smith normal form that we use, alongside our analysis, allows us to significantly reduce the algorithmic complexity to \(\tilde{O}(\mu(A))\), where \(\mu(A)\) denotes the \emph{minimal} element of matrix \(A\), giving the main result of our paper which we now state. Note that set \(\Xi\) is defined in Section~\ref{notSec} and captures the most interesting class of \(2 \times 2\) matrices under consideration.

\begin{maintheorem}
    Given a matrix \(A = \big(\begin{smallmatrix} a_1 & a_2 \\ a_3 & a_4 \end{smallmatrix}\big) \in \Xi\) and \(c \in \mathbb{N}\) such that \(c\, | \det(A)\), we can determine  whether there exist non-unit matrices \(B, C \in \mathbb{N}^{2 \times 2}\) such that \(\det(C) = c\) and \(A=BC\), and we can compute such (non-unique) \(B, C\) if they exist, in time \(\tilde{O}(\mu(A))\).
\end{maintheorem}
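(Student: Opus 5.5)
The plan is to reduce the search for \(C\) to a single loop of length \(O(\mu(A))\), in which every iteration is settled by a constant number of extended-gcd computations and interval intersections, each costing \(\mathrm{polylog}(\max_j a_j)\) time.

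\emph{Setup.} Write \(C=\big(\begin{smallmatrix} x & y \\ z & w\end{smallmatrix}\big)\) with \(\det C = xw-yz=c\). Since \(A\in\Xi\) is nonsingular, we have \(B=A\,\mathrm{adj}(C)/c\). Hence \(B\in\mathbb{N}^{2\times 2}\) holds exactly when two conditions are met.
\begin{enumerate}[(i)]
\item The columns \((w,-z)^T\) and \((-y,x)^T\) of \(\mathrm{adj}(C)\) lie in the lattice \(\Lambda=\{v\in\mathbb{Z}^2 : Av\in c\,\mathbb{Z}^2\}\).
\item The sign conditions \(a_1w\ge a_2z\), \(a_3w\ge a_4z\), \(a_2x\ge a_1y\) and \(a_4x\ge a_3y\) hold.
\end{enumerate}
We compute the Smith normal form \(A=U\,\mathrm{diag}(s_1,s_2)\,V\) with \(U,V\in GL_2(\mathbb{Z})\) in polylogarithmic time. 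Then \(\Lambda=V^{-1}\{u : c\mid s_1u_1,\ c\mid s_2u_2\}\), so \(\Lambda\) has the explicit basis \(V^{-1}\mathrm{diag}(c/\gcd(c,s_1),\,c/\gcd(c,s_2))\). Membership in \(\Lambda\) therefore becomes a pair of linear congruences on the entries of \(C\), with explicit moduli. We also exclude \(C\in\{I,J\}\) and \(B\in\{I,J\}\); these are finitely many explicit checks, done at the end.

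\emph{Bounding one entry by \(\mu(A)\).} Using the unit \(J\), we first move the minimal entry to position \((1,1)\). Replacing \(A\) by \(JA\), \(AJ\) or \(JAJ\) maps factorisations \(BC\) to \((JB)C\), \(B'C'\) with \(C'=CJ\), and so on, without changing \(|\det C|\). We assume \(\Xi\) is closed under these operations. So WLOG \(\mu(A)=a_1\).

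Since \(B\) is nonsingular, its first row is nonzero. Hence \(b_{11}>0\) or \(b_{12}>0\), and from \(a_1=b_{11}x+b_{12}z\) we get \(x\le a_1\) or \(z\le a_1\). We branch on which of the two holds and loop over its value \(t\in\{0,\dots,\mu(A)\}\). This gives \(O(\mu(A))\) iterations in total.

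\emph{One iteration.} Consider the branch \(x=t\); the other branch is symmetric. Condition (i) for the column \((-y,t)^T\) confines \(y\) to a single residue class modulo some \(m\), or is infeasible, and this can be read off from the Smith form basis. The two inequalities in (ii) involving \(x\) give \(0\le y\le t\min(a_2/a_1,\,a_4/a_3)\).

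The equation \(tw-yz=c\) couples \(y\) with \((z,w)\), and the column \((w,-z)^T\) must also lie in \(\Lambda\). My approach is to parametrise the solutions by a single integer parameter. Fixing \(x=t\), the solutions \((y,z,w)\) of \(tw-yz=c\) together with both lattice conditions form a finite union of affine families: for each admissible residue of \(y\), the pair \((z,w)\) ranges over an arithmetic progression whose step is determined by \(\gcd(t,y)\) and the lattice.

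I expect this step to be the main obstacle. It should be resolved by exploiting the Smith form: since \(\mathrm{adj}(C)\) must map into \(\Lambda\) and \(\det \mathrm{adj}(C)=c\), we have \([\mathbb{Z}^2:\Lambda]\mid c\). In the good case \(\mathrm{adj}(C)\) is then \(V^{-1}D\,W\), where \(D\) is the Smith basis matrix of \(\Lambda\) and \(W\in\mathbb{Z}^{2\times 2}\) has \(\det W=c/[\mathbb{Z}^2:\Lambda]\). The search thus reduces to choosing \(W\), which is a \(2\times 2\) integer matrix of prescribed determinant.

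Fixing one entry of \(C\) (namely \(t\)) fixes a linear functional of \(W\). The remaining freedom is then a single integer parameter in each of the two columns. The sign conditions (ii) cut each such parameter down to an interval, because they say that the rational slopes \(w/z\) and \(x/y\) lie in the cones \([\max(a_2/a_1,a_4/a_3),\infty)\) and \([\max(a_1/a_2,a_3/a_4),\infty)\) respectively. Deciding whether the interval contains a point of the appropriate progression is an \(O(\mathrm{polylog})\) computation by extended Euclid. A witness, if found, gives \(C\) and then \(B=A\,\mathrm{adj}(C)/c\) directly.

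Summing over the \(O(\mu(A))\) values of \(t\) and the two branches yields the claimed \(\tilde O(\mu(A))\) bound. The correctness direction, that no factorisation is missed, follows from the bound \(\min(x,z)\le\mu(A)\) established above.
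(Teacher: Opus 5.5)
\textbf{Verdict: there is a genuine gap.} The step you flag as the main obstacle is not resolved, and without it the $\tilde O(\mu(A))$ bound does not follow.

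Your architecture matches the paper's. For $A\in\Xi$ one has $\Lambda=\{v:\ v_1\equiv\sigma v_2 \pmod c\}$ with $\sigma\equiv -a_1^{-1}a_2 \pmod c$. Hence $[\mathbb{Z}^2:\Lambda]=c$ and $\mathrm{adj}(C)=\big(\begin{smallmatrix} c & \sigma\\ 0 & 1\end{smallmatrix}\big)W$ with $\det W=1$. This is exactly the paper's $\mathrm{adj}(C)=V_C\,\mathrm{diag}(c,1)\,U_C$, with your $W$ playing the role of $U_C$.

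Write $W=\big(\begin{smallmatrix} p & q\\ r & t\end{smallmatrix}\big)$, so $x=t$, $y=-cq-\sigma t$, $z=-r$. Fixing $t$ leaves two parameters that cannot be handled independently. The condition $\det W=1$ forces $r\equiv -q^{-1}\pmod t$, so the first column is $(p_0,r_0)^T+j(q,t)^T$, with a base point depending non-linearly on $q$. The slope constraint $w/z\ge\max(a_2/a_1,a_4/a_3)$ then gives a $j$-interval that also depends on $q$. So the feasible set is not one progression intersected with one interval, and a single extended-Euclid test does not decide it.

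The only constraints you place on $q$ (equivalently $y$) are $y\equiv-\sigma t\pmod c$ and $0\le y\le t\min(a_2/a_1,a_4/a_3)$. These can leave on the order of $t/c$ candidates for each $t$. Enumerating them costs up to $\tilde O(\mu(A)^2/c)$ overall, not $\tilde O(\mu(A))$.

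The missing idea is the analogue of the paper's step that pins $s=\lceil c_3d_2/a_3\rceil$ once $c_3$ is fixed: use the equation for the minimal entry to pin the other column.
\begin{itemize}
\item With $a_1$ minimal, $a_1=b_{11}x+b_{12}z$. All entries of $B$ and $C$ are positive when $1<c<|\det A|$ (Lemma~\ref{posDetNonNegLem}), so $1\le b_{12}\le a_1-1$.
\item Also $c\,b_{12}=a_2t-a_1y\equiv a_2t\pmod{a_1}$, and $\gcd(c,a_1)=1$ as in the proof of Lemma~\ref{Vclemma}.
\item Hence $b_{12}$ is the unique representative of $c^{-1}a_2t$ in $[1,a_1-1]$, and $y=(a_2t-c\,b_{12})/a_1$ is determined by $t$.
\item In the branch $z=t$, pin $b_{11}$ the same way via $c\,b_{11}\equiv-a_2t\pmod{a_1}$.
\end{itemize}
Only then is the second column of $W$ fixed and the first column a genuine one-parameter affine family, so your interval-plus-congruence test becomes polylogarithmic. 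This is exactly the paper's route: fixing $c_3$ forces the column $(s,-c_3)^T$ of $U_C$ and leaves $c_1$ as the only free parameter.

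Two smaller points. First, replacing $A$ by $AJ$ flips the sign of $\det C$, so use $AJ=(BJ)(JCJ)$ rather than appealing to $|\det C|$. Second, for $c\in\{1,|\det A|\}$, finding a unit witness says nothing about whether a non-unit one exists. Those cases need the separate argument the paper gives.
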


Note that our algorithm is \emph{linear} in the minimal element of \(A\), whereas the brute force approach is \emph{cubic} in the maximal element of \(A\), however both are exponential in the binary representation of \(A\). We do not currently know of a lower bound for this problem.

\section{Notation and preliminary results}\label{notSec}

We denote by \(\mathbb{F}^{n \times n}\) the set of square \(n \times n\) matrices over some semiring  \(\mathbb{F}\). We will primarily be interested in \(\mathbb{N}^{2 \times 2}\). By \(\mathbb{N}\) we denote the set of all nonnegative integers, i.e., \(\mathbb{N} = {0, 1, 2, \ldots}\)

We will quite extensively rely on Smith normal forms of \(2\times 2\) integer matrices that are explained in the following proposition.

\begin{proposition}[Smith normal form \cite{KannanB79}]\label{SNFprop}
    For any matrix \(A = \big(\begin{smallmatrix} a_1 & a_2 \\ a_3 & a_4 \end{smallmatrix}\big) \in \mathbb{Z}^{2\times2}\), there exists a factorisation \(A = U_A^{-1}S_AV_A^{-1}\), called a \emph{Smith normal form} of \(A\), such that \(U_A\), \(V_A\) are integer matrices with \(\det(U_A) = \det(V_A) = 1\) and \(S_A\) is a diagonal matrix of the form  \(S_A = \big(\begin{smallmatrix} p & 0 \\ 0 & pq \end{smallmatrix}\big)\), where \(p = \gcd(a_1,a_2,a_3,a_4)\) and \(q = \det(A)/p^2\).
\end{proposition}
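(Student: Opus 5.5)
The proof will be the standard Euclidean reduction of $A$ to a diagonal matrix, using only operations of determinant $1$. Left multiplication by matrices in $\mathrm{SL}_2(\mathbb{Z})$ performs row operations and right multiplication performs column operations. I will produce $U_A, V_A \in \mathrm{SL}_2(\mathbb{Z})$ with $U_A A V_A = S_A$ of the stated form, which gives $A = U_A^{-1}S_AV_A^{-1}$.

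The moves are the elementary matrices $\big(\begin{smallmatrix} 1 & k \\ 0 & 1 \end{smallmatrix}\big)$ and $\big(\begin{smallmatrix} 1 & 0 \\ k & 1 \end{smallmatrix}\big)$, which add an integer multiple of one row (column) to the other. I will also use the signed swap $\big(\begin{smallmatrix} 0 & 1 \\ -1 & 0 \end{smallmatrix}\big)$ and $-I$, both of determinant $1$. The ordinary swap $J$ has determinant $-1$ and is not allowed.

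\textbf{Invariant.} Let $g(M)$ denote the gcd of the entries of $M$. Each entry of $UMV$ is an integer combination of entries of $M$, and the same holds for the inverse transformation. Hence $g(UAV) = g(A) = p$ throughout. Also $\det(UAV) = \det(A)$ throughout. The case $A = 0$ is trivial ($p = 0$, take $U_A = V_A = I$), so assume $A \neq 0$.

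\textbf{Step 1: clear the first column.} Move a nonzero entry to position $(1,1)$ using signed swaps. Run Euclid's algorithm on the first column with row operations: subtract multiples and signed-swap whenever the $(2,1)$ entry becomes smaller in absolute value than the $(1,1)$ entry. This ends with first column $(d,0)^T$, where $|d|$ is the gcd of the original first column.

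\textbf{Step 2: clear the first row.} Do the same with column operations. If this reintroduces a nonzero $(2,1)$ entry, repeat Step 1. Each time a reintroduction happens, $|(1,1)|$ strictly decreases, because it is replaced by a proper gcd. So the alternation terminates in a diagonal matrix $\mathrm{diag}(d,e)$.

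\textbf{Step 3: force $d \mid e$.} If $d \nmid e$, add row $2$ to row $1$ to get $\big(\begin{smallmatrix} d & e \\ 0 & e \end{smallmatrix}\big)$, then rerun Steps 1--2. The new $(1,1)$ entry divides $\gcd(d,e)$, which is a proper divisor of $d$, so $|d|$ strictly decreases and this too terminates. At the end $d \mid e$. The gcd of $\mathrm{diag}(d,e)$ is then $|d|$, so the invariant gives $|d| = p$.

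\textbf{Step 4: fix signs and read off $q$.} If $d < 0$, multiply on the left by $-I$; this flips both diagonal signs and keeps determinant $1$. Now $d = p > 0$. From $p\,e = \det(A)$ we get $e = \det(A)/p$, and $p \mid e$ gives $e = pq$ with $q = \det(A)/p^2 \in \mathbb{Z}$. Collect the left factors into $U_A$ and the right factors into $V_A$.

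When $\det(A) = 0$ the procedure gives $e = 0$ and $q = 0$, consistent with the statement. When $\det(A) < 0$ the sign of $\det(A)$ is carried by $q$.

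The main obstacle is bookkeeping rather than ideas: making sure every step is justified by a strictly decreasing nonnegative quantity ($|(1,1)|$), and that the restriction to determinant-$1$ moves costs nothing. The latter is handled by using signed swaps and $-I$ in place of $J$.
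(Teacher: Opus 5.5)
Your proof is correct. The Euclidean reduction uses only \(\mathrm{SL}_2(\mathbb{Z})\) moves (signed swaps and \(-I\) in place of \(J\)), and this works. The entry-gcd and the determinant are invariant, as you say. The measure \(|(1,1)|\) never increases. It strictly drops both when Step~2 reintroduces a \((2,1)\) entry and after the row addition in Step~3. The sign fix in Step~4 correctly moves the sign of \(\det(A)\) into the entry \(pq\).

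The paper gives no proof of this proposition; it simply cites Kannan and Bachem. Their actual contribution is a polynomial-time algorithm for Smith and Hermite forms in arbitrary dimension, where the difficulty is controlling intermediate coefficient growth. Standard formulations of the Smith form use unimodular (\(\det=\pm1\)) transformations and nonnegative invariant factors. The normalisation \(\det(U_A)=\det(V_A)=1\), with \(pq\) possibly negative, therefore needs a small adjustment by \(\mathrm{diag}(1,-1)\). The citation leaves this implicit, and your Step~4 makes it explicit, so your route buys a self-contained, elementary existence proof.

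What it does not supply is the particular shape the paper later relies on. In Lemma~\ref{smithFormLemma}, for matrices whose first column is coprime, the paper writes down a Smith form in closed form from B\'ezout coefficients. There \(U_X = \big(\begin{smallmatrix} s & t \\ -x_3 & x_1 \end{smallmatrix}\big)\) and \(V_X = \big(\begin{smallmatrix} 1 & \sigma \\ 0 & 1 \end{smallmatrix}\big)\) is unitriangular. It is this one-parameter \(V\), with \(\sigma\) fixed by the congruences in Lemma~\ref{Vclemma}, that drives the algorithm, rather than the general existence statement you proved.
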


In our paper, we primarily consider matrices \(A\) satisfying \(\gcd(a_1,a_2,a_3,a_4) = 1\), in which case the matrix \(S_A\) in the Smith normal form will be equal to \(S_A = \big(\begin{smallmatrix} 1 & 0 \\ 0 & \det(A) \end{smallmatrix}\big)\).

Let \(\Pi_n \subseteq \mathbb{N}^{n \times n}\) denote the set of all units within \(\mathbb{N}^{n \times n}\). Define \(\Phi_n \subseteq \mathbb{N}^{n \times n}\) as the set of \emph{primes} of \(\mathbb{N}^{n \times n}\); i.e., \(A \in \Phi_n\) implies that there does not exist \(B, C \in \mathbb{N}^{n \times n} \setminus \Pi_n\) such that \(A=BC\). We may thus call such a matrix \(A\) \emph{prime}. We may then consider the following basic proposition.

\begin{proposition}\label{charProp}
For all \(n \geq 2\), it holds that:
\begin{enumerate}
    \item \(\Phi_n = \mathbb{N}^{n \times n} \setminus (\mathbb{N}^{n \times n} \setminus \{\Pi_n\})(\mathbb{N}^{n \times n} \setminus \{\Pi_n\})\);
    \item \(\Phi_n = \Pi_n\Phi_n\Pi_n\).
\end{enumerate}
\end{proposition}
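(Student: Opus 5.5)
We prove the two items separately, since both follow directly from the definitions together with the fact that \(\Pi_n\) is a group under multiplication. (We read the set \(\{\Pi_n\}\) in the statement as \(\Pi_n\) itself.)

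\textbf{Item 1.} This is a restatement of the definition of \(\Phi_n\). A matrix \(A \in \mathbb{N}^{n\times n}\) fails to be prime exactly when \(A = BC\) for some \(B, C \in \mathbb{N}^{n\times n}\setminus \Pi_n\). That is, \(A\) is composite if and only if it lies in the product set \((\mathbb{N}^{n\times n}\setminus \Pi_n)(\mathbb{N}^{n\times n}\setminus \Pi_n)\). Taking complements inside \(\mathbb{N}^{n\times n}\) gives item 1. No further argument is needed.

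\textbf{Item 2.} Since \(I \in \Pi_n\), we have \(\Phi_n = I\Phi_n I \subseteq \Pi_n\Phi_n\Pi_n\). For the reverse inclusion, take \(A \in \Phi_n\) and \(P, Q \in \Pi_n\), and suppose for contradiction that \(PAQ = BC\) with \(B, C\) non-units. Then
\[
A = (P^{-1}B)(CQ^{-1}),
\]
where \(P^{-1}, Q^{-1} \in \Pi_n\) because \(\Pi_n\) is the group of units, so both factors are nonnegative integer matrices.

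It remains to check that these two factors are non-units. This is the only step needing care. If \(P^{-1}B\) were a unit \(U\), then \(B = PU\) would be a product of units and hence a unit, which is a contradiction. Likewise, if \(CQ^{-1}\) were a unit, then \(C\) would be a unit. So \(A\) would be composite, contradicting \(A \in \Phi_n\). Hence \(PAQ \in \Phi_n\), which gives \(\Pi_n\Phi_n\Pi_n \subseteq \Phi_n\).

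The argument only needs \(\Pi_n\) to be closed under products and inverses, which holds by definition of units in a monoid. In particular, we never need the explicit description of \(\Pi_n\) as the permutation matrices. The hypothesis \(n \ge 2\) plays no role beyond fixing the setting.
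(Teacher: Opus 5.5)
Your proof is correct and follows the paper's argument. Item 1 is read off from the definition. For item 2 you rewrite $PAQ = BC$ as $A = (P^{-1}B)(CQ^{-1})$ and show that neither factor can be a unit, exactly as the paper does; you also record the trivial inclusion $\Phi_n \subseteq \Pi_n\Phi_n\Pi_n$ via $I \in \Pi_n$, which the paper leaves implicit.
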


\begin{proof}
     The first property holds by definition, since \(\Phi_n\) is the set of elements from \(\mathbb{N}^{n \times n}\) which are not the product of two non-units from \(\mathbb{N}^{n \times n}\); i.e., exactly \((\mathbb{N}^{n \times n} \setminus \{\Pi_n\})(\mathbb{N}^{n \times n} \setminus \{\Pi_n\})\). 
    
    To show the second property, let the matrices $A, U_1, U_2\in\mathbb{N}^{n\times n}$ satisfy $A\in\Phi_n$ and $U_1, U_2\in\Pi_n$. Suppose for contradiction that $U_1AU_2\notin\Phi_n$ -- that is, there exist matrices $B, C\in\mathbb{N}^{n\times n}$ where $B, C\notin\Pi_n$ such that $U_1AU_2=BC$. Note that if $U_1, U_2\in\Pi_n$, then $U_1^{-1}, U_2^{-1}\in\Pi_n$, and so $A=U_1^{-1}BCU_2^{-1}$. Then, it follows that $U_1^{-1}B\notin\Pi_n$, otherwise $B=U_1U_1^{-1}B$ is a product of units, contradicting our assumption that $B\notin\Pi_n$ and $CU_2^{-1}$ is not a unit, otherwise $C=CU_2^{-1}U$ would be a product of units, contradicting our assumption that $C\notin\Pi_n$. Therefore, $A$ is a product of two non-units, contradicting our assumption that $A$ is prime, so it must follow that $U_1AU_2$ is prime.
\end{proof}

By Proposition~\ref{charProp}, we thus see that if a matrix \(M \in \mathbb{N}^{n \times n}\) is prime, then so is \(P_1 M P_2\), where \(P_1, P_2 \in \Pi_n\). This motivates the following definition. 

\noindent {\bf Associate matrices}. Given a matrix \(M \in \mathbb{N}^{n \times n}\), then the set of matrices \(\Pi_n M \Pi_n\) is called the set of \emph{associates} of matrix \(M\); and thus \(M'\) is an associate of \(M\) if \(M' \in \Pi_n M \Pi_n\). For example, if \(n=2\), then a matrix \(A = \big(\begin{smallmatrix} a_1 & a_2 \\ a_3 & a_4 \end{smallmatrix}\big)\) has four associates:
\begin{eqnarray}\label{assocEqn}
\left\{A, JA, AJ, JAJ\right\} =
\left\{\begin{pmatrix} a_1 & a_2 \\ a_3 & a_4 \end{pmatrix}, \begin{pmatrix} a_3 & a_4 \\ a_1 & a_2 \end{pmatrix}, \begin{pmatrix} a_2 & a_1 \\ a_4 & a_3 \end{pmatrix}, \begin{pmatrix} a_4 & a_3 \\ a_2 & a_1 \end{pmatrix} \right\}.
\end{eqnarray}

Note that a matrix is prime if and only if all of its associates are. Define \(\mu:\mathbb{N}^{2 \times 2} \to \mathbb{N}\) as the smallest element of a matrix. We will often later assume that \(a_3\) is thus the smallest element of a particular matrix to ease notations.

In the next section, we deal with certain families of matrices for which we know whether they are prime. With that in mind, we now define some useful classes of matrices, namely the notions of dominated, adj-coprime, and \(\Xi\)-matrices. 

\noindent {\bf Dominated matrices.} Matrix \(A = \big(\begin{smallmatrix} a_1 & a_2 \\ a_3 & a_4 \end{smallmatrix}\big)\) is \emph{dominated} if, up to associate (multiplying by the unit \(J = \big(\begin{smallmatrix} 0 & 1 \\ 1 & 0 \end{smallmatrix}\big)\) as necessary so that \(\mu(A) = a_3\)), it holds that \(a_4 \leq a_2\) or \(a_1 \leq a_2\). Note that in this case, each element of one row or column of the matrix is less than or equal to the other. If \(A\) is not dominated, we call it \emph{non-dominated}. Note that if a matrix \(A\) is dominated, then necessarily \(\big(\begin{smallmatrix} 1 & 1 \\ 0 & 1 \end{smallmatrix}\big)\) is a factor \cite{BCH22}, which is why this is a useful notion. 

\noindent {\bf Adj-coprime matrices.} If all adjacent elements of \(A = \big(\begin{smallmatrix} a_1 & a_2 \\ a_3 & a_4 \end{smallmatrix}\big)\) are coprime (i.e., \(\gcd(a_1a_4, a_2a_3) = 1\)), then we call \(A\) \emph{adj-coprime}; otherwise it is \emph{non adj-coprime}. A matrix is non-dominated or adj-coprime if and only if all of its associates are, as seen from Equation~(\ref{assocEqn}). 

\noindent {\bf Set \(\Xi\).} Let \(\Xi \subseteq \mathbb{N}^{2 \times 2}\) denote the set of all nonnegative \(2\times 2\) integer matrices that are both non-dominated and adj-coprime. We then call a matrix \(X \in \Xi\) a \emph{\(\Xi\)-matrix}.
\begin{example}
    Consider the matrices $A=\big(\begin{smallmatrix}
        10 & 9 \\ 2 & 12
    \end{smallmatrix}\big), B=\big( \begin{smallmatrix}
        17 & 15 \\ 12 & 13
    \end{smallmatrix} \big)$ and $C=\big(\begin{smallmatrix}
        8 & 7 \\ 5 & 12
    \end{smallmatrix}\big)$. Notice that $A$ is non-adj-coprime and non-dominated. Notice that $B$ is adj-coprime and dominated. Finally, $C$ is adj-coprime and non-dominated. As $C$ is adj-coprime and non-dominated, then $C$ is a $\Xi$-matrix.
\end{example}
We may also observe that in Example~\ref{newEx}, \(A\), \(B\), and  \(C\) are \(\Xi\)-matrices.

We will use \emph{soft-O} notation \(\tilde{O}\) which hides polylogarithmic factors when discussing the complexity of our algorithms.
\begin{definition}\label{def:softO}
Let \(f: \mathbb{Z}^{2\times 2}\to \mathbb{R}\) be a nonnegative function on integer matrices \(A = \big(\begin{smallmatrix} a_1 & a_2 \\ a_3 & a_4 \end{smallmatrix}\big)\). Then \(\tilde{O}(f(A))\) will be used as a shorthand for \(O(f(A)\cdot \log^k\max_{1\leq j \leq 4}|a_j|)\) for some \(k>0\).    
\end{definition}
For example, in Theorem~\ref{singleDetLem}, the \(\tilde{O}(\mu(A))\) time complexity comes from our procedure requiring \(\mu(A)\) iterations, and each iteration can be done in time polynomial in the binary representation of \(A\), that is, in time \(O(\log^k\max_{1\leq j \leq 4}|a_j|)\) for some \(k>0\). 

\section{Primality for simple matrix families}

We now collect some results concerning families of matrices from \(\mathbb{N}^{2 \times 2}\) that are known to be prime. These results are either new or come from \cite{BCH22,Heil23,RM86}.

An early work in this area showed that \(2 \times 2\) matrices with determinant $1$ are almost always composite.

\begin{theorem}[\cite{RM86}]\label{detOneAtoms}
Let \(S = \{M | M \in \mathbb{N}^{2 \times 2} \text{ and } \det(M) = 1\}\). Then \(
P = \big(\begin{smallmatrix} 1 & 0 \\ 1 & 1 \end{smallmatrix}\big) \text{ and }
Q = \big(\begin{smallmatrix} 1 & 1 \\ 0 & 1 \end{smallmatrix}\big)
\) 
are the only primes within \(S\).
\end{theorem}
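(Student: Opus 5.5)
The plan is to prove two things: every \(M \in S\setminus\{I\}\) is a product of copies of \(P\) and \(Q\), and \(P\) and \(Q\) are themselves prime. Since \(\det(P)=\det(Q)=1\) and neither equals \(I\) or \(J\), neither is a unit. So any \(M\in S\) other than \(I\), \(P\), \(Q\) that is a product of at least two of them is composite. The identity is a unit and so is excluded from consideration.

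\emph{Step 1 (descent).} Let \(M=\big(\begin{smallmatrix} a & b\\ c & d\end{smallmatrix}\big)\in S\) with \(M\neq I\). I first show that the rows of \(M\) are comparable entrywise.
\begin{itemize}
\item Suppose \(a\ge c+1\) and \(d\ge b+1\). Then \(1=ad-bc\ge (c+1)(b+1)-bc=b+c+1\), so \(b=c=0\). Hence \(ad=1\) and \(M=I\), which is excluded.
\item Suppose \(a<c\) and \(b>d\). Then \(ad<bc\), so the determinant is negative, which is impossible.
\end{itemize}
Hence either \(a\ge c\) and \(b\ge d\), or \(a\le c\) and \(b\le d\). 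In the first case \(M=Q\,\big(\begin{smallmatrix} a-c & b-d\\ c & d\end{smallmatrix}\big)\). In the second case \(M=P\,\big(\begin{smallmatrix} a & b\\ c-a & d-b\end{smallmatrix}\big)\). The cofactor has nonnegative entries, determinant \(1\), and strictly smaller entry sum, because a row of a determinant-one matrix cannot be zero. Induction on the entry sum then writes \(M\) as a word in \(P\) and \(Q\). Any word of length at least \(2\) exhibits \(M\) as a product of two non-units, so such an \(M\) is composite.

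\emph{Step 2 (primality of \(P\) and \(Q\)).} By symmetry (\(Q=JPJ\) and Proposition~\ref{charProp}) it suffices to treat \(P\). Let \(s(X)\) denote the entry sum. Suppose \(P=BC\) with \(B,C\in\mathbb{N}^{2\times 2}\). Both factors are nonsingular, so \(B\) has no zero column and \(C\) has no zero row. Write \(\beta_k\) for the \(k\)-th column sum of \(B\) and \(\gamma_k\) for the \(k\)-th row sum of \(C\). Then
\[
s(BC)=\sum_{k=1}^{2}\beta_k\gamma_k\ge\sum_{k=1}^{2}(\beta_k+\gamma_k-1)=s(B)+s(C)-2,
\]
using \(xy\ge x+y-1\) for integers \(x,y\ge 1\). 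Since \(s(P)=3\), we get \(s(B)+s(C)\le 5\). So one factor has entry sum at most \(2\). A nonsingular nonnegative integer matrix with entry sum \(2\) is a permutation matrix, that is, \(I\) or \(J\). Hence that factor is a unit, and \(P\) is prime.

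The only delicate point is the row-comparability argument in Step 1, which guarantees that the descent never gets stuck. Once that is in place, the remaining steps are routine.
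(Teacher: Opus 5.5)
Your proof is correct. The paper gives no proof of this statement to compare against, since it simply imports the result from \cite{RM86}, so your argument supplies a self-contained one. Your Step~1 uses the same domination idea as the paper's proposition on singular and unimodular matrices. There, $\min$/$\max$ comparisons of diagonal and anti-diagonal entries show that a non-unit of determinant $\pm1$ is dominated, and the factor $Q$ is then quoted from \cite{BCH22}. You instead make the factor $P$ or $Q$ explicit and iterate by descent on the entry sum. This gives the stronger conclusion that every element of $S\setminus\{I\}$ is a word in $P$ and $Q$, which is exactly how the theorem is invoked in the proof of Theorem~\ref{singleDetLem}. Deducing that from the bare statement would need an extra argument, since a determinant-$1$ matrix can also factor through non-units of determinant $-1$, e.g.\ $QP=(JP)(JP)$. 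Your Step~2, via $s(BC)=\sum_k\beta_k\gamma_k\ge s(B)+s(C)-2$, proves $P$ and $Q$ prime in all of $\mathbb{N}^{2\times 2}$. The paper uses this fact for associates of $Q$ without proof, and it means your result holds under either reading of ``prime within $S$''.

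One detail should be stated explicitly. To call the tail of a word of length at least $2$ a non-unit, you need that no nonempty word in $P,Q$ equals $I$. This follows from your own observation that left multiplication by $P$ or $Q$ strictly increases the entry sum on $S$. More directly, for $M\in S\setminus\{I,P,Q\}$ the first descent step already gives $M=PX$ or $M=QX$ with $X\in S\setminus\{I\}$, and such an $X$ is a non-unit since $\det X=1\neq\det J$.
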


Matrices in \(\Xi\) with a smallest entry no larger than $3$ can be classified as shown in the next proposition.

\begin{proposition}[\cite{BCH22}]
Let \(A \in \Xi\). If \(\mu(A) \leq 3\), then \(A\) is prime.
\end{proposition}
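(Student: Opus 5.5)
Throughout, suppose for contradiction that \(A\in\Xi\), \(\mu(A)\le 3\), and \(A=BC\) with \(B,C\in\mathbb{N}^{2\times2}\setminus\{I,J\}\). By Proposition~\ref{charProp} and Equation~(\ref{assocEqn}), being prime, non-dominated and adj-coprime are all invariant under taking associates. So I would first normalise \(A\) so that \(a_3=\mu(A)\). Non-domination then gives \(a_2<a_1\) and \(a_2<a_4\), and adj-coprimality gives \(\gcd(a_1a_4,a_2a_3)=1\).

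\textbf{The case \(a_3=0\).} Here \(\gcd(a_1a_4,0)=1\) forces \(a_1=a_4=1\), and then \(a_2<1\) forces \(A=I\), a unit. This is a degenerate case to be excluded or treated as the convention requires. So assume \(1\le a_3\le3\); then all entries of \(A\) are positive.

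\textbf{Rank-one decomposition.} Write \(u_1,u_2\) for the columns of \(B\) and \(v_1^{T},v_2^{T}\) for the rows of \(C\). Then
\[
A=R_1+R_2,\qquad R_k=u_kv_k^{T}\in\mathbb{N}^{2\times2}.
\]
If some \(R_k=0\), then \(A\) has rank one, which is impossible since \(\det(A)\neq 0\) (a rank-one matrix with all entries positive fails \(\gcd(a_1a_4,a_2a_3)=1\) unless all entries equal \(1\), which is dominated). So both \(R_k\) are nonzero. I would then split according to the supports of the four vectors \(u_1,u_2,v_1,v_2\).

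\textbf{Some \(u_k\) or \(v_k\) has a zero entry.} Suppose, say, \(u_1=(x,0)^{T}\). Then the second row of \(A\) is \((a_3,a_4)=u_2[2]\,v_2^{T}\). So \(u_2[2]\) divides \(\gcd(a_3,a_4)=1\), giving \(v_2=(a_3,a_4)\). The first row then becomes \((a_1,a_2)=x\,v_1^{T}+u_2[1](a_3,a_4)\).
\begin{itemize}
\item If \(u_2[1]\ge1\), then \(a_2\ge a_4\), contradicting non-domination.
\item If \(u_2[1]=0\), then \(B\) is diagonal and \(x=1\) by the gcd condition on the first row. Hence \(B=I\), a unit, contradicting the assumption on \(B\).
\end{itemize}
The symmetric subcases (zero entry in \(u_1\) at position \(1\), in \(u_2\), or in some \(v_k\), which affects a column of \(A\) instead of a row) should be handled the same way. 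Each one either forces \(B\) or \(C\) to be in \(\{I,J\}\), or makes one row or column of \(A\) dominate the other.

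\textbf{All four vectors positive.} Now each \(R_k\) contributes at least \(1\) to every entry, and
\[
a_3=u_1[2]v_1[1]+u_2[2]v_2[1]\le 3,
\]
so the contributions to \(a_3\) are \((1,1)\), \((1,2)\) or \((2,1)\). I would enumerate these: in each, the relevant coordinates of the \(u_k,v_k\) are pinned to \(1\) or \(2\). Then I would express \(a_1,a_2,a_4\) through the remaining free coordinates. Writing \(\alpha=u_1[1]\), \(\beta=u_2[1]\), \(\gamma=v_1[2]\), \(\delta=v_2[2]\), the sub-case \(u_1[2]=v_1[1]=u_2[2]=v_2[1]=1\) gives
\[
A=\begin{pmatrix}\alpha+\beta & \alpha\gamma+\beta\delta\\ 2 & \gamma+\delta\end{pmatrix}.
\]
Here \(a_2<a_1\) and \(a_2<a_4\) are jointly impossible when \(\alpha,\beta,\gamma,\delta\ge1\). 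The remaining sub-cases with a coordinate equal to \(2\) should go the same way, or else produce a common factor that violates \(\gcd(a_1a_4,a_2a_3)=1\).

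\textbf{Main obstacle.} I expect the main work to be making this enumeration exhaustive and clean, especially the cases with \(a_3=3\), where non-domination alone may not suffice. In those cases I would combine the inequality constraints with the coprimality condition, e.g. parity or divisibility by \(3\) of \(a_2a_3\) against \(a_1a_4\), to rule them out.
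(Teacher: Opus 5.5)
The paper gives no proof of this proposition; it is quoted from \cite{BCH22}, so there is no in-text argument to match. On its own terms your argument is correct, and it uses the same tools the paper develops later. Your zero-entry case is the argument of Lemma~\ref{posDetNonNegLem}: a zero in \(B\) or \(C\), together with the row and column gcds of \(A\) being \(1\), forces either a unit factor or a dominating row or column. Your positive case rests on the identity \(a_3=b_3c_1+b_4c_3\), which is also used in the proof of Theorem~\ref{singleDetLem}.

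What you flag as the main obstacle is not one, and coprimality is not needed there. Write \(b_j,c_j\) for the entries of \(B,C\), so that \(u_1[2]=b_3\), \(v_1[1]=c_1\), \(u_2[2]=b_4\), \(v_2[1]=c_3\). The conditions \(b_3,c_1,b_4,c_3\ge 1\) and \(b_3c_1+b_4c_3\le 3\) allow at most one of these four numbers to exceed \(1\). Hence one of two things happens:
\begin{itemize}
\item \(c_1=c_3=1\), giving \(a_1=b_1+b_2\le b_1c_2+b_2c_4=a_2\);
\item \(b_3=b_4=1\), giving \(a_4=c_2+c_4\le b_1c_2+b_2c_4=a_2\).
\end{itemize}
In both cases \(A\) is dominated. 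This works uniformly for \(a_3\in\{2,3\}\), and \(a_3=1\) cannot occur in the positive case at all. So the remaining sub-cases of your enumeration close at once by non-domination alone.

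Two small points remain.
\begin{itemize}
\item Some zero-entry subcases produce ``row 2 \(\ge\) row 1'' (e.g.\ \(b_1=0\) or \(b_2=0\)) or ``column 1 \(\ge\) column 2'' (e.g.\ \(c_2=0\) or \(c_4=0\)). In these, add the line that minimality of \(a_3\) forces \(a_1=a_3\le a_2\), respectively \(a_4=a_3\le a_2\). Only then do they contradict your normalised form \(a_2<\min(a_1,a_4)\).
\item The case \(a_3=0\) needs no special convention. Since \(BC=I\) over \(\mathbb{N}\) forces \(B\) and \(C\) to be units, \(I\) and \(J\) are prime under the paper's definition of \(\Phi_2\). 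Proposition~\ref{prop:singular/unimodular} also records this.
\end{itemize}
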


The following proposition (Proposition~4.5 of \cite{BCH22}, rephrased here using our terminology) identifies a large family of composite matrices within \(\mathbb{N}^{2 \times 2}\).

\begin{proposition}[\cite{BCH22}]\label{BaethXiprop}
Let \(A = \big(\begin{smallmatrix} a & b \\ c & d \end{smallmatrix}\big) \in \mathbb{N}^{2 \times 2}\) be prime. Then
\begin{itemize}
\item At most two of \(a, b, c, d\) can be zero;
\item If one or two \(a, b, c, d\) are zero then \(A\) is  from the following set (where \(p \in \mathbb{N}\) is prime):
\[
\left\{\begin{pmatrix} p & 0 \\ 0 & 1 \end{pmatrix},  \begin{pmatrix} 0 & p \\ 1 & 0 \end{pmatrix}, \begin{pmatrix} 0 & 1 \\ p & 0 \end{pmatrix}, \begin{pmatrix} 1 & 0 \\ 0 & p \end{pmatrix}, \begin{pmatrix} 1 & 1 \\ 0 & 1 \end{pmatrix}, \begin{pmatrix} 0 & 1 \\ 1 & 1 \end{pmatrix}, \begin{pmatrix} 1 & 1 \\ 1 & 0 \end{pmatrix}, \begin{pmatrix} 1 & 0 \\ 1 & 1 \end{pmatrix} \right\};
\]
\item If none of \(a, b, c, d\) are zero, then \(A \in \Xi\). 
\end{itemize}
\end{proposition}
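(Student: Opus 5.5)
We prove the contrapositive: every matrix outside the listed forms is written as an explicit product of two non-units. By Proposition~\ref{charProp}, primality is invariant under passing to associates, so we may freely replace \(A\) by \(JA\), \(AJ\) or \(JAJ\). The whole argument is a short case split on the number and position of zero entries, with one explicit factorisation per case.

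\textbf{Zero rows and columns.} First, suppose \(A\) has a zero row; up to associate it is the second row. Then \(A = \big(\begin{smallmatrix} 1 & 0 \\ 0 & 0 \end{smallmatrix}\big) A\). The left factor is not a unit, and neither is \(A\), since it is singular. A zero column is handled symmetrically by multiplying on the right. If three or more entries are zero, some row is zero, which gives the first bullet. If exactly two entries are zero and they lie in a common row or column, we are in the same situation.

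\textbf{Two zeros off a common line.} Then, up to associate, \(A = \big(\begin{smallmatrix} a & 0 \\ 0 & d \end{smallmatrix}\big)\) with \(a,d \geq 1\); the anti-diagonal case becomes this after multiplying by \(J\). We use \(A = \big(\begin{smallmatrix} a & 0 \\ 0 & 1 \end{smallmatrix}\big)\big(\begin{smallmatrix} 1 & 0 \\ 0 & d \end{smallmatrix}\big)\).
\begin{itemize}
\item If \(a,d \geq 2\), both factors are non-units, so \(A\) is composite.
\item So one of \(a,d\) equals \(1\), say \(d=1\). If \(a=mn\) with \(m,n \geq 2\), then \(\mathrm{diag}(a,1) = \mathrm{diag}(m,1)\,\mathrm{diag}(n,1)\) is composite.
\end{itemize}
Hence \(a\) is \(1\) or prime, and \(a=1\) gives the unit \(I\), which is not prime. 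This leaves exactly the diagonal and anti-diagonal members of the list.

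\textbf{Exactly one zero.} Up to associate, \(A = \big(\begin{smallmatrix} a & b \\ 0 & d \end{smallmatrix}\big)\) with \(a,b,d \geq 1\). We use \(A = \big(\begin{smallmatrix} 1 & 0 \\ 0 & d \end{smallmatrix}\big)\big(\begin{smallmatrix} a & b \\ 0 & 1 \end{smallmatrix}\big)\). The right factor is never a unit because \(b \geq 1\), so \(A\) is composite unless \(d=1\). When \(d=1\) we use \(A = \big(\begin{smallmatrix} 1 & b \\ 0 & 1 \end{smallmatrix}\big)\big(\begin{smallmatrix} a & 0 \\ 0 & 1 \end{smallmatrix}\big)\), which is a product of non-units unless \(a=1\). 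In the remaining case \(A = Q^b\), which is composite for \(b \geq 2\). So only \(Q\) and its associates survive, and these are the remaining four matrices of the list.

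\textbf{No zero entries.} We must show that a prime \(A\) is both adj-coprime and non-dominated.
\begin{itemize}
\item \emph{Adj-coprime.} Suppose \(A\) is not adj-coprime. Then some prime \(p\) divides two entries lying in a common row or column, since the four pairs \((a_1,a_2),(a_1,a_3),(a_4,a_2),(a_4,a_3)\) are exactly such pairs. Up to associate, \(p \mid a,b\), and then \(A = \big(\begin{smallmatrix} p & 0 \\ 0 & 1 \end{smallmatrix}\big)\big(\begin{smallmatrix} a/p & b/p \\ c & d \end{smallmatrix}\big)\). Both factors are non-units, the second because it has no zero entries. A shared column factor is treated by factoring on the right.
\item \emph{Non-dominated.} Suppose \(A\) is dominated. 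Then, up to associate and transposition of the roles of left and right, one row dominates the other entrywise, and \(A = \big(\begin{smallmatrix} 1 & 1 \\ 0 & 1 \end{smallmatrix}\big)\big(\begin{smallmatrix} a-c & b-d \\ c & d \end{smallmatrix}\big)\). This is the fact cited from \cite{BCH22}. The second factor is not a unit, since \(c,d \geq 1\) forbid its being \(I\) or \(J\).
\end{itemize}
Hence \(A \in \Xi\).

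The only step needing care is bookkeeping: checking in each case that the chosen associate normalisation really covers every configuration of zeros and of dominance. It must also be confirmed that the second factor in the dominated case is nonnegative. This holds precisely by the dominance inequalities after the normalisation \(\mu(A)=a_3\) used in the definition.
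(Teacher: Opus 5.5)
Your proof is correct. The paper gives no argument of its own for this proposition: it is quoted as Proposition~4.5 of \cite{BCH22}. Even the nearby fact that a dominated matrix has \(Q=\big(\begin{smallmatrix} 1 & 1 \\ 0 & 1 \end{smallmatrix}\big)\) as a factor is only cited.

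Your route therefore differs mainly in being self-contained. After normalising by associates via Proposition~\ref{charProp}, you give an explicit product of two non-units in every excluded configuration:
\begin{itemize}
\item a projection times \(A\) for a zero row or column;
\item \(\mathrm{diag}(a,1)\,\mathrm{diag}(1,d)\) and \(\mathrm{diag}(m,1)\,\mathrm{diag}(n,1)\) in the diagonal case;
\item \(\mathrm{diag}(1,d)\big(\begin{smallmatrix} a & b \\ 0 & 1 \end{smallmatrix}\big)\), \(\big(\begin{smallmatrix} 1 & b \\ 0 & 1 \end{smallmatrix}\big)\mathrm{diag}(a,1)\) and \(Q\cdot Q^{b-1}\) in the one-zero case;
\item \(\mathrm{diag}(p,1)\) split off, on the left or right, from a row or column sharing a prime;
\item \(Q\) split off from a dominated row or column.
\end{itemize}
Each identity holds, and each non-unit claim holds: every relevant cofactor has a nonzero entry where \(I\) or \(J\) has a zero. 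The pigeonhole step for three zeros is right. So is the observation that \(\gcd(a_1a_4,a_2a_3)>1\) forces a common prime within a single row or column.

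The gain over the citation is that the statement can be checked inside the paper. This includes the implication ``dominated \(\Rightarrow\) factor \(Q\)'', which the paper otherwise only asserts. Since the statement is only a necessary condition, you are right not to prove that \(\mathrm{diag}(p,1)\) or \(Q\) is actually prime.

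You should make one convention explicit. When you say the unit \(I\) is not prime, you are adopting the convention of \cite{BCH22} that primes (atoms) are non-units. Read literally, the paper's Proposition~\ref{charProp}(1) and its proposition on singular and unimodular matrices count \(I\) and \(J\) as prime. Under that reading, \(I\) and \(J\) would contradict the second bullet: each has two zeros and neither is in the list. The statement is true only under your convention, so state it.
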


Note that the contrapositive of Proposition~\ref{BaethXiprop} proves that \(A \not\in \Xi\) implies that \(A\) is composite. The following proposition was proven in \cite{BCH22} as Proposition~$4.8$.

\begin{proposition}[\cite{BCH22}]\label{bchprop}
Let \(A = \big(\begin{smallmatrix} x & x+1 \\ x+1 & x \end{smallmatrix}\big) \in \mathbb{N}^{2\times 2}\) be a nonunit. Then \(A\) is prime if and only if \(2x+1\) is prime.
\end{proposition}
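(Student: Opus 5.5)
The plan is to prove the two directions separately, using the determinant throughout. Note that \(\det(A) = x^2-(x+1)^2 = -(2x+1)\). The hypothesis that \(A\) is a nonunit excludes \(x=0\), where \(A=J\), so we may assume \(x\geq 1\).

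\emph{Composite case.} Suppose \(2x+1\) is not prime, so \(2x+1=(2a+1)(2b+1)\) with \(a,b\geq 1\). Put \(B_a=\big(\begin{smallmatrix} a & a+1 \\ a+1 & a\end{smallmatrix}\big)\) and \(B_b=\big(\begin{smallmatrix} b & b+1 \\ b+1 & b\end{smallmatrix}\big)\). A direct multiplication gives \(B_aB_b=\big(\begin{smallmatrix} y+1 & y \\ y & y+1\end{smallmatrix}\big)\) with \(y=2ab+a+b\). Then \(2y+1=(2a+1)(2b+1)=2x+1\), so \(y=x\). Hence \(A=J B_aB_b=(JB_a)B_b\). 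Both \(JB_a\) and \(B_b\) are nonunits, because their determinants have absolute value \(2a+1>1\) and \(2b+1>1\). So \(A\) is composite.

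\emph{Prime case.} Suppose \(p=2x+1\) is prime and \(A=BC\) with \(B,C\in\mathbb{N}^{2\times 2}\setminus\Pi_2\). Since \(\det(B)\det(C)=-p\), one factor, say \(D\in\{B,C\}\), has \(\det(D)=\pm1\).
\begin{enumerate}
\item If \(\det(D)=-1\), replace \(D\) by \(DJ\) or \(JD\) and absorb the \(J\) into the other factor. The new \(D\) has determinant \(1\), is nonnegative, and is not \(I\), since otherwise the original \(D\) would be the unit \(J\).
\item A nonnegative determinant-\(1\) matrix other than \(I\) is a product of primes of \(S\). This holds by the norm-growth argument, since any factorisation stays inside \(S\). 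By Theorem~\ref{detOneAtoms}, these primes are \(P\) and \(Q\).
\item Consequently, either \(A\) or \(JA\) or \(AJ\) has a factorisation \(XM\) or \(MX\) with \(X\in\{P,Q\}\) and \(M\in\mathbb{N}^{2\times 2}\).
\end{enumerate}

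It then remains to rule out these finitely many shapes. The associates of \(A\) are only \(A\) and \(JA=\big(\begin{smallmatrix} x+1 & x \\ x & x+1\end{smallmatrix}\big)\), since \(AJ=JA\) and \(JAJ=A\). For each of these two matrices and each \(X\in\{P,Q\}\), compute \(X^{-1}A'\) and \(A'X^{-1}\). This amounts to subtracting one row from the other, or one column from the other. In every case the result contains an entry \(-1\), because the two rows (and the two columns) of \(A'\) are \((x,x+1)\) and \((x+1,x)\) in some order. This contradicts \(M\) being nonnegative, so \(A\) is prime.

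The main obstacle is the bookkeeping in steps 1–3 of the prime case. One must justify that a determinant-\(\pm1\) nonunit factor can always be reduced, after moving units across, to a product that exposes a single \(P\) or \(Q\) adjacent to \(A\) or \(JA\). This needs the fact that every non-identity element of \(S\) factors over \(\{P,Q\}\), which is the content of Theorem~\ref{detOneAtoms} combined with finiteness of factorisations. Once that reduction is in place, the final sign check is immediate.
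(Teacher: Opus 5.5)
Your proof is correct, but it takes a different route from the paper, which gives no argument of its own. The paper cites \cite{BCH22} and remarks that the direction ``$2x+1$ prime $\Rightarrow$ $A$ prime'' is a special case of Theorem~\ref{HmainThm}, since $A\in\Xi$ and $\det(A)=-(2x+1)$. You instead prove both directions from scratch.

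\begin{itemize}
\item For compositeness, you use that the family is closed under multiplication up to $J$: $B_aB_b=JA$ when $(2a+1)(2b+1)=2x+1$. This gives an explicit factorisation, which the citation does not provide.
\item For primality, a prime determinant forces a factor of determinant $\pm1$. After moving a $J$ across, that factor is a nonempty product of $P$'s and $Q$'s. So $P$ or $Q$ would have to split off $A$ on one side, and the $\pm1$ row and column differences of $A$ forbid this. This is exactly the mechanism the paper itself uses at the start of the proof of Theorem~\ref{singleDetLem}, where a determinant-one factor is shown to force dominance.
\end{itemize}

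Your version needs only Theorem~\ref{detOneAtoms}. It in fact proves the prime-determinant ($m=1$) case of Theorem~\ref{HmainThm} for every non-dominated matrix. The appeal to Heil's theorem is heavier machinery, but it also covers $|\det(A)|\in\{2p,4p\}$.

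One small imprecision in your step~2: a factorisation of a determinant-one matrix into non-units need not stay inside $S$. For example, $\big(\begin{smallmatrix}2&1\\1&1\end{smallmatrix}\big)=\big(\begin{smallmatrix}1&1\\1&0\end{smallmatrix}\big)^2$, and the factor has determinant $-1$. Rewriting $XY=(XJ)(JY)$ puts both factors in $S\setminus\{I\}$ with strictly smaller entry sums, so your induction still goes through.
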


The following result implies the forward direction of Proposition~\ref{bchprop} (Proposition~$4.8$ from \cite{BCH22}) since \(\det(A) = -(2x+1)\), as pointed out in \cite{Heil23}. 

\begin{theorem}[\cite{Heil23}]\label{HmainThm}
Let \(A \in \Xi\). If \(\det(A) = mp\) where \(p\) is prime and \(m \in \{1,2,4\}\), then \(A\) is prime.
\end{theorem}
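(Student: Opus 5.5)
The plan is to argue by contradiction. Suppose $A\in\Xi$, $|\det(A)|=mp$ with $p$ prime and $m\in\{1,2,4\}$, and suppose $A=BC$ with $B,C\in\mathbb{N}^{2\times 2}\setminus\{I,J\}$. Since $\det(A)\neq 0$, we have $|\det(B)|\cdot|\det(C)|=mp$ with both factors nonzero. The strategy has two parts: first show that non-domination of $A$ forces $B$ and $C$ to be non-dominated, and then show that a non-dominated nonnegative matrix of small determinant is so restricted that adj-coprimality of $A$ fails.

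\textbf{Step 1 (reduction to non-dominated factors).} As recalled before Theorem~\ref{detOneAtoms}, a dominated matrix has $\big(\begin{smallmatrix} 1 & 1 \\ 0 & 1 \end{smallmatrix}\big)$ as a factor, up to associates. Concretely, if a row of $B$ dominates the other row entrywise, then $B=Q'B'$ with $B'\in\mathbb{N}^{2\times 2}$ and $Q'$ an associate of $Q$. Then $A=Q'(B'C)$, so one row of $A$ dominates the other and $A$ is dominated. Suppose instead that a column of $B$ dominates the other, so $B=B''Q''$. Then $A=B''(Q''C)$, and I would like to transfer the domination to $A$. This is the step I expect to be the main obstacle: column-domination of the left factor does not obviously pass to $A$. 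My first attempt is to iterate. Replace $C$ by $Q''C$ and $B$ by $B''$, which strictly decreases the entry sum of the left factor, and repeat until the left factor is non-dominated, or is a unit, or has acquired a row-domination (the latter giving the contradiction above). If the left factor becomes a unit, then $B$ was a product of units and $P,Q$, so $\det(B)=\pm1$. In that case I treat $B$ separately: by Theorem~\ref{detOneAtoms} a non-unit $B$ with $|\det B|=1$ has a factor $P$ or $Q$ on the left, up to units, which gives the row-domination again. The symmetric argument on the right handles $C$ via column domination of $A$. At the end of this step I want: without loss of generality, $B$ and $C$ are both non-dominated, and each has $|\det|\geq 2$.

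\textbf{Step 2 (determinant bound for non-dominated matrices).} Let $N=\big(\begin{smallmatrix} a & b \\ c & d \end{smallmatrix}\big)$ be non-dominated with $c=\mu(N)$, after passing to an associate. Non-domination gives $a\geq b+1$ and $d\geq b+1$, and $c\leq b$, so
\[
|\det N|=ad-bc\geq (b+1)^2-b^2=2b+1 .
\]
The factorisations of $mp$ into two factors both at least $2$ are as follows: none when $m=1$; $\{2,p\}$ when $m=2$; and $\{2,2p\}$ or $\{4,p\}$ when $m=4$. So in every case one of $B,C$, say $N$, has $|\det N|\in\{2,4\}$. The bound then gives $b\leq 1$. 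Enumerating: $b=0$ forces $c=0$, so $N$ is an associate of $\mathrm{diag}(2,1)$, $\mathrm{diag}(4,1)$ or $\mathrm{diag}(2,2)$. The case $b=1,c=1$ needs $ad\in\{3,5\}$ with $a,d\geq 2$, which is impossible. The case $b=1,c=0$ needs $ad\in\{2,4\}$ with $a,d\geq 2$, giving $N=\big(\begin{smallmatrix} 2 & 1 \\ 0 & 2 \end{smallmatrix}\big)$ up to associates.

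\textbf{Step 3 (contradiction with adj-coprimality).} In each case, $N$ has a row (if $N=B$) or a column (if $N=C$) consisting entirely of even entries, or all of its entries are even. Multiplying, $A=BC$ then has an entire row or column of even entries. An even row gives $2\mid a_3$ and $2\mid a_4$; an even column gives $2\mid a_2$ and $2\mid a_4$, and similarly for the other row or column. In each case $\gcd(a_1a_4,a_2a_3)\geq 2$, so $A$ is not adj-coprime, contradicting $A\in\Xi$. Hence $A$ is prime. The remaining work is to make Step~1 rigorous, in particular the iteration that moves column-domination of $B$ into $C$ without losing non-unit status of either factor.
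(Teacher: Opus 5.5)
The paper gives no proof of this theorem; it is quoted from \cite{Heil23}. Your argument therefore has to stand on its own, and the gap is in Step~1. The conclusion you aim for there is a single factorisation $A=BC$ in which both $B$ and $C$ are non-dominated non-units. Your iteration cannot produce this, and it is false for general $\Xi$-matrices. Once the left factor has been made non-dominated, the symmetric reduction on the right moves copies of $P$ or $Q$ from the right factor back into the left one. That makes the left factor column-dominated again, so the two reductions simply undo each other.

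A concrete obstruction is $A=\big(\begin{smallmatrix}16&15\\9&16\end{smallmatrix}\big)\in\Xi$, with $\det A=121$. Up to inserting $J$, its only non-unit factorisations are $\big(\begin{smallmatrix}3&4\\1&5\end{smallmatrix}\big)\big(\begin{smallmatrix}4&1\\1&3\end{smallmatrix}\big)$ and $\big(\begin{smallmatrix}3&1\\1&4\end{smallmatrix}\big)\big(\begin{smallmatrix}5&4\\1&3\end{smallmatrix}\big)$. To check this, note that a determinant-$11$ right divisor has a uniquely determined row lattice, since $\mathbb{Z}^2/\mathbb{Z}^2A\cong\mathbb{Z}/121$ is cyclic. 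Also, a unimodular non-unit factor is always dominated. In each of these factorisations one factor is dominated. So no argument for your Step~1 conclusion that ignores the determinant hypothesis can succeed.

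The repair is to drop simultaneity, which Step~2 never needs. Run your left reduction on $A=BC$. It ends with $A=LR$, where $L$ is non-dominated and $|\det L|=|\det B|$. A row-dominated or unit $L$ is excluded exactly as you argue. Once $|\det B|\ge 2$ is known, $L$ is automatically a non-unit, which removes your worry about preserving non-unit status. Separately, run the right reduction on the same $A=BC$ to get $A=L'R'$, where $R'$ is non-dominated and $|\det R'|=|\det C|$.

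One of $|\det B|,|\det C|$ lies in $\{2,4\}$. If it is $|\det B|$, Step~2 applied to $L$ gives an even row of $A=LR$. If it is $|\det C|$, Step~2 applied to $R'$ gives an even column of $A=L'R'$. This works because every matrix on your list has both an even row and an even column. Your bound $|\det N|\ge 2b+1$, the enumeration and Step~3 are all correct, so with this change the proof goes through.
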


Our next proposition illustrates why we only need consider nonsingular matrices and provides a polynomial time procedure for determining if a unimodular matrix is prime (a matrix is unimodular if its determinant is \(\pm1\)). The proof proceeds by elementary calculations showing that if a matrix has determinant \(0, \pm 1\), then it is necessarily dominated and thus composite.

\begin{proposition}\label{prop:singular/unimodular matrices}
Let \(A=\big(\begin{smallmatrix} a_1 & a_2 \\ a_3 & a_4 \end{smallmatrix}\big) \in \mathbb{N}^{2 \times 2}\). If \(A\) is singular then it is composite. If \(A\) is unimodular, then it is composite, unless \(A\) is a unit or an associate of \(\big(\begin{smallmatrix} 1 & 1 \\ 0 & 1 \end{smallmatrix}\big)\), in which case it is prime.
\end{proposition}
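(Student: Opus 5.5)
We must prove Proposition~\ref{prop:singular/unimodular matrices}. The plan is to reduce both claims to the fact that a dominated matrix has \(Q=\big(\begin{smallmatrix} 1 & 1 \\ 0 & 1 \end{smallmatrix}\big)\) (or an associate of it) as a factor, as noted in the definition of dominated matrices and \cite{BCH22}. Then we handle the few small exceptional cases where the cofactor is a unit.

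\emph{Step 1 (singular or unimodular implies dominated).} Pass to an associate with \(\mu(A)=a_3\); this is allowed by Proposition~\ref{charProp}. Suppose \(A\) is non-dominated, so \(a_1>a_2\) and \(a_4>a_2\), and \(a_2\ge a_3\) since \(a_3\) is minimal.
\begin{itemize}
\item If \(a_3=0\), then \(\det(A)=a_1a_4\), and \(a_1,a_4\ge a_2+1\ge 1\).
\item If \(a_3\ge1\), then \(a_1a_4\ge (a_2+1)^2 > a_2^2+a_2 \ge a_2a_3+1\), so \(\det(A)=a_1a_4-a_2a_3\ge 2\).
\end{itemize}
The remaining cases are \(a_3=0\) with \(\det(A)\in\{0,\pm1\}\). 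Then \(a_1a_4\le 1\) forces \(a_1=a_4=1\) and \(a_2=0\). This is the identity, which is dominated since \(a_1\le a_2\) fails but \(a_4\le a_2\) also fails. We therefore treat it separately as a unit.

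So every non-unit matrix with \(\det\in\{0,\pm1\}\) is dominated. (A negative determinant with \(a_3\) minimal is handled the same way after swapping columns, i.e.\ passing to \(AJ\).)

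\emph{Step 2 (exhibit the factorisation).} Suppose WLOG that \(a_1\le a_2\) and \(a_3\le a_4\), i.e.\ the first column is dominated by the second. Then
\[
A=\begin{pmatrix} a_1 & a_2-a_1 \\ a_3 & a_4-a_3 \end{pmatrix}\begin{pmatrix} 1 & 1 \\ 0 & 1 \end{pmatrix}=A'Q,
\]
with \(A'\in\mathbb{N}^{2\times2}\). The row-dominated case is analogous, with \(Q\) or its transpose on the left. Since \(Q\) is not a unit, \(A\) is composite unless \(A'\) is a unit.

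If \(A'\in\{I,J\}\), then \(A\in\{Q, JQ\}\). Both are associates of \(Q\), or of \(P\), which is itself an associate \(JQJ\) of \(Q\). These are exactly the excluded matrices. For a singular \(A\), \(\det A'=0\), so \(A'\) is never a unit and \(A\) is always composite.

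\emph{Step 3 (primality of the exceptions).} \(Q\) is prime by Theorem~\ref{detOneAtoms}, and hence so are all its associates by Proposition~\ref{charProp}. Units are trivially not products of two non-units.

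The main obstacle is the bookkeeping over associates in Step 1. One must check that after normalising \(\mu(A)=a_3\), every sign of the determinant and every zero pattern lands in a dominated configuration. I would first handle the zero-entry cases directly and then apply the inequality \(a_1a_4-a_2a_3\ge 2\) for the non-dominated case with all entries positive.
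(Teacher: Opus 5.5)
Your proof is correct and follows the paper's overall strategy: show that every non-unit $A$ with $\det(A)\in\{0,\pm1\}$ is dominated, so that an associate of $Q=\big(\begin{smallmatrix}1&1\\0&1\end{smallmatrix}\big)$ splits off. The difference is in how you establish domination.

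The paper does two separate computations. For singular $A$, it first disposes of a zero column, then writes $a_4=a_2a_3/a_1$ and uses minimality of $a_3$. For $\det(A)=1$ (calling $-1$ analogous), it rules out either diagonal strictly exceeding the other by comparing $a_1a_4$ and $a_2a_3$ with squares of the diagonal extrema, without fixing where the minimum sits.

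You prove the contrapositive once. With $a_3=\mu(A)$, non-domination gives $\det(A)\ge 2$ if $a_3\ge1$. If $a_3=0$, it gives $\det(A)=a_1a_4\ge1$, with equality only for $I$. Passing to an associate changes only the sign of the determinant, so this covers the singular case and both unimodular signs at once. You also make explicit two points the paper leaves implicit: a unit cofactor forces $A$ to be an associate of $Q$, and these exceptions (and the units) are prime by Theorem~\ref{detOneAtoms} and Proposition~\ref{charProp}.

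Two small corrections:
\begin{itemize}
\item The identity is \emph{non}-dominated, since both $a_1\le a_2$ and $a_4\le a_2$ fail. That is exactly why it must be set aside as a unit, so ``which is dominated'' should read ``which is non-dominated''.
\item The parenthetical about negative determinants and passing to $AJ$ is unnecessary. Your bound already applies to every normalised matrix whatever the sign of $\det(A)$, and passing to $AJ$ would move the minimal entry out of position $(2,1)$.
\end{itemize}
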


\begin{proof}
We first consider the case that \(A\) is singular, thus \(\det(A) = 0\). We may assume that \(a_3\) is the smallest element of \(A\) without loss of generality by Equation~(\ref{assocEqn}). Since \(A\) is singular, then \(a_1a_4-a_2a_3 = 0\), implying that \(a_4 = \frac{a_2a_3}{a_1}\); noting that \(a_1 \neq 0\) since \(a_3\) is the smallest element of \(A\) and if \(a_1 = a_3 = 0\) then \(A\) has a left zero column and thus \(A = A\big(\begin{smallmatrix} 1 & 1 \\ 0 & 1 \end{smallmatrix}\big)\), and \(A\) is composite. Therefore \(A=\big(\begin{smallmatrix} a_1 & a_2 \\ a_3 & \frac{a_2a_3}{a_1} \end{smallmatrix}\big)\). Since \(a_3\) is the smallest element of \(A\), if \(a_1 \leq a_2\) or \(\frac{a_2a_3}{a_1} \leq a_2\), then the matrix is dominated, in which case it has a trivial factor \(\big(\begin{smallmatrix} 1 & 1 \\ 0 & 1 \end{smallmatrix}\big) \in \mathbb{N}^{2 \times 2}\). Since then \(\frac{a_2a_3}{a_1} > a_2\), it implies that \(\frac{a_3}{a_1} > 1\) which is a contradiction since \(a_3\) is minimal. Thus \(A\) is composite. 

Finally, consider the case where $A$ is unimodular, thus $\det(A)=\pm 1$.  We aim to show that if $A$ is unimodular, then either \(A\) is dominated or is a unit. We only consider the case when $\det(A)= 1$ since the case when $\det(A)= -1$ is analogous.

Suppose $a_1a_4-a_2a_3=1$. We first assume for contradiction that $\min(a_2, a_3)>\max (a_1, a_4)$. As $A\in\mathbb{N}^{2\times 2}$, then $\min(a_2, a_3)\geq \max(a_1, a_4)+1$. We can see that $a_2a_3\geq (\max(a_1,a_4)+1)^2=\max(a_1, a_4)^2+2\max(a_1, a_4)+1$ and $a_1a_4\leq \max(a_1, a_4)^2$. As $a_1a_4-a_2a_3=1$, then $a_2a_3=a_1a_4-1\leq \max(a_1,a_4)^2-1$, and it follows that  $\max(a_1, a_4)^2+2\max(a_1, a_4)+1\leq \max(a_1, a_4)^2-1$ and so $\max(a_1,a_4)\leq -1$, which contradicts the assumption that $A\in\mathbb{N}^{2\times 2}$, and so we must have $\min(a_2, a_3)\leq \max (a_1, a_4)$.

Next, we assume for contradiction that 
$\min(a_1, a_4)>\max(a_2, a_3)$. As $A\in\mathbb{N}^{2\times 2}$, then $\min(a_1, a_4)\geq \max(a_2,a_3)+1$. We can see that $a_1a_4\geq (\max(a_2, a_3)+1)^2=\max(a_2,a_3)^2+2\max(a_2,a_3)+1$ and $a_2a_3\leq \max(a_2,a_3)^2$.  As $a_1a_4-a_2a_3=1$, then $a_1a_4=a_2a_3+1\leq \max(a_2,a_3)^2+1$, and it follows that $\max(a_2, a_3)^2+2\max(a_2, a_3)+1\leq \max(a_2, a_3)^2+1$, and so $\max(a_2, a_3)\leq 0$. As $a_2, a_3\geq0$, it must follow that $a_2=a_3=0$ and so $a_1=a_4=1$. We then obtain that $A=I$, which is a unit. Otherwise, if \(A\) is not a unit, we have a contradiction with $A\in\mathbb{N}^{2\times 2}$ and so it must follow that $\min(a_1, a_4)\leq \max(a_2, a_3)$.

We have shown that if $A$ is not a unit, then $\min(a_2, a_3)\leq \max (a_1, a_4)$ and $\min(a_1, a_4)\leq \max(a_2, a_3)$, which implies that $A$ is a dominated matrix. Therefore $A$ can be factored by an associate of $\big(\begin{smallmatrix}
    1 & 1 \\ 0 & 1
\end{smallmatrix} \big)$. This means that $A$ is either composite or it is equal to one of these associates.
\end{proof}

The above results imply that the only matrices within \(\mathbb{N}^{2 \times 2}\) for which we cannot immediately determine if they are prime are elements in \(\Xi\) with composite determinants. To see this, note that in Proposition~\ref{BaethXiprop}, the only elements of \(\mathbb{N}^{2 \times 2}\) that are not completely characterised, are those with no zero elements, and such matrices must belong to \(\Xi\) to possibly be prime. Then, Theorem~\ref{HmainThm} says that matrices in \(\Xi\) are prime if they have  a prime determinant. Thus, the consideration of matrices within \(\Xi\) with composite determinant, and the provision of an efficient algorithm to determine primality is the main topic of this paper and considered in the next section.

\section{\(\Xi\)-matrices of composite determinant}

We now approach the problem of determining whether matrices which have a composite determinant and which belong to \(\Xi\) are prime, recalling that all other matrices from \(\mathbb{N}^{2 \times 2}\) are already characterised. For example, given \(A = \big(\begin{smallmatrix}  1382 & 1243 \\ 1045 & 1316 \end{smallmatrix}\big) \in \Xi\), how might we determine if \(A\) is prime without a priori knowledge of matrices \(B\) and \(C\), noting that \(\det(A) = 2^6\cdot 23 \cdot 31\)? Unfortunately, we do not know of any characterisation of such matrices. However, a brute force algorithm suggests itself, although it is not efficient.

\begin{proposition}\label{bruteForceProp}
Given a matrix \(A = \big(\begin{smallmatrix} a_1 & a_2\\ a_3 & a_4 \end{smallmatrix}\big) \in \Xi \subseteq \mathbb{N}^{2 \times 2}\) with a composite determinant, then determining if \(A\) is prime and computing a factorisation can be achieved in time \(\tilde{O}((\max_{1 \leq j \leq 4}{a_j})^4)\). 

Moreover, if we are given a factorisation \(\det(A) = pq\) with \(|p|, |q| \geq 2\), then determining if \(A = BC\) where \(B, C \in \mathbb{N}^{2 \times 2}\) and \(\det(B) = p\), \(\det(C) = q\), can be achieved in time \(\tilde{O}((\max_{1 \leq j \leq 4}{a_j})^3)\), and we can find such \(B, C\) in the same time bound.
\end{proposition}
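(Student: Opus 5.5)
We will prove the second statement first and then obtain the first by enumerating determinant splittings. Write \(m = \max_{1\leq j\leq 4} a_j\). Suppose \(A = BC\) with \(B, C \in \mathbb{N}^{2\times 2}\). Since all entries are nonnegative, each entry of \(A\) is a sum of products of entries of \(B\) and \(C\). Because \(A\in\Xi\) has no zero entry and neither \(B\) nor \(C\) can have a zero row or column, every entry of \(C\) is bounded by \(m\). Indeed, \(a_{ij} = b_{i1}c_{1j} + b_{i2}c_{2j}\), and for each \(k\) some \(b_{ik}\geq 1\) (column \(k\) of \(B\) is nonzero), which gives \(c_{kj}\leq a_{ij}\leq m\). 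The same reasoning bounds the entries of \(B\) by \(m\).

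Given the target \(\det(C)=q\), we enumerate \(c_1, c_2, c_3 \in \{0,\dots,m\}\). This is \(O(m^3)\) choices. For each choice, the fourth entry is forced: if \(c_1\neq 0\), then \(c_4 = (q + c_2c_3)/c_1\), and we check that this is a nonnegative integer at most \(m\). If \(c_1 = 0\), then \(\det(C) = -c_2c_3\), so the choice either fails immediately or leaves \(c_4\) free. In that case we instead enumerate \((c_2,c_3,c_4)\) symmetrically, or simply treat the \(c_1=0\) slice with an extra loop over \(c_4\), which costs only \(O(m^3)\) as well since that slice is two-dimensional.

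With \(C\) fixed and \(\det(C)=q\neq 0\), we compute \(B = A\,\mathrm{adj}(C)/q\) using integer arithmetic on \(O(\log m)\)-bit numbers. We then check that \(B\) is integral and nonnegative and that neither \(B\) nor \(C\) lies in \(\{I,J\}\). The non-unit condition is automatic when \(|p|,|q|\geq 2\), since units have determinant \(\pm1\). Each iteration takes polylogarithmic time, giving \(\tilde{O}(m^3)\) overall, and any successful iteration outputs the factorisation.

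For the first statement, \(A\) is composite if and only if \(A=BC\) with non-units \(B\) and \(C\). Since \(A\in\Xi\) has composite determinant, it is nonsingular. Any factorisation has \(\det(B)\det(C)=\det(A)\), and each factor has determinant of absolute value at least \(2\). If \(|\det(C)|=1\) and \(C\) were not a unit, then by Proposition~\ref{prop:singular/unimodular matrices} we could regroup the factors, though this needs no special handling. The simplest uniform approach is to enumerate every divisor \(q\) of \(\det(A)\), of either sign, and run the procedure above for each.

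The number of divisors of \(\det(A)\) is \(d(\det A) = (m^2)^{o(1)}\), which is not polylogarithmic. To stay within the bound, we avoid looping over divisors entirely. We enumerate all \((c_1,c_2,c_3,c_4)\in\{0,\dots,m\}^4\) directly, which is \(O(m^4)\) candidates. For each candidate we check \(\det(C)\neq 0\), compute \(B = A\,\mathrm{adj}(C)/\det(C)\), and verify that \(B\) is integral, nonnegative, and that both \(B\) and \(C\) are non-units. Each check takes polylogarithmic time, which yields \(\tilde{O}(m^4)\).

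The main obstacle is justifying the entry bound \(c_{kj}\leq m\). This relies on \(B\) having no zero column. That holds because otherwise \(A = BC\) would have a zero column, which is impossible since every entry of a \(\Xi\)-matrix here is positive. A matrix with composite determinant in \(\Xi\) has all entries nonzero by adj-coprimality plus non-domination, since a zero entry would force domination or a gcd failure. Once this bound is established, both parts are pure enumeration.
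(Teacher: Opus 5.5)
Your approach is essentially the paper's. You bound every entry of \(C\) by \(m=\max_j a_j\), enumerate the \(O(m^4)\) candidates for \(C\) (or \(O(m^3)\) once \(\det(C)=q\) fixes the fourth entry), then compute \(B=AC^{-1}\) and check integrality, nonnegativity and the non-unit conditions. Your explicit handling of the \(c_1=0\) slice is more careful than the paper's remark that any three entries of \(C\) determine the fourth.

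One justification is wrong, though. A zero column in \(B\) does not give \(A=BC\) a zero column; it only makes \(A\) rank one. For example, \(\big(\begin{smallmatrix} 1 & 0\\ 1 & 0\end{smallmatrix}\big)\big(\begin{smallmatrix} 1 & 1\\ 1 & 1\end{smallmatrix}\big)=\big(\begin{smallmatrix} 1 & 1\\ 1 & 1\end{smallmatrix}\big)\). The correct reason, which the paper uses, is that \(\det(A)\neq 0\) forces \(\det(B)\neq 0\), so no column of \(B\) vanishes. Then \(c_{kj}\le a_{ij}\le m\) follows as you say, and positivity of the entries of \(A\) is not needed at all.

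Your aside that every non-unit factor has \(|\det|\ge 2\) is unproved, since non-units of determinant \(\pm1\) exist in general. It is harmless, however, because your final \(m^4\) enumeration checks the non-unit condition directly.
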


\begin{proof}
    Consider a matrix \(A = \big(\begin{smallmatrix} a_1 & a_2\\ a_3 & a_4 \end{smallmatrix}\big) \in \Xi\). To determine whether \(A\) is prime, we need to check if there exist two non-unit matrices \(B, C \in \mathbb{N}^{2 \times 2}\) such that \(A = BC\). Suppose \(A = \big(\begin{smallmatrix} a_1 & a_2\\ a_3 & a_4 \end{smallmatrix}\big)\), \(B = \big(\begin{smallmatrix} b_1 & b_2\\ b_3 & b_4 \end{smallmatrix}\big)\), and \(C = \big(\begin{smallmatrix} c_1 & c_2\\ c_3 & c_4 \end{smallmatrix}\big)\). If  \(A = BC\), then we have
\[
\big(\begin{smallmatrix} a_1 & a_2\\ a_3 & a_4 \end{smallmatrix}\big) = \big(\begin{smallmatrix} b_1 & b_2\\ b_3 & b_4 \end{smallmatrix}\big) \cdot \big(\begin{smallmatrix} c_1 & c_2\\ c_3 & c_4 \end{smallmatrix}\big) =
\big(\begin{smallmatrix} b_1c_1\,+\,b_2c_3 & b_1c_2\,+\,b_2c_4\\ b_3c_1\,+\,b_4c_3 & b_3c_2\,+\,b_4c_4 \end{smallmatrix}\big).
\]
Since \(\det(A)\neq 0\), it follows that \(b_1\neq 0\) or \(b_3\neq 0\), and hence \(c_1 \leq \max\{a_1,a_3\} \leq \max_{1 \leq j \leq 4}{a_j}\). In a similar way, we can show that \(b_i, c_i\leq \max_{1 \leq j \leq 4}{a_j}\) for \(1\leq i\leq 4\).

To decide if \(A\) can be factorised as a product \(A = BC\) of non-units \(B, C \in \mathbb{N}^{2 \times 2}\), we need to check if there exists a matrix \(C\) such that: (i) it has non-zero determinant; (ii) it is not a unit; and (iii) \(AC^{-1}\) is a non-negative integer matrix that is also not equal to a unit. By the above observation, there are at most \((\max_{1 \leq j \leq 4}{a_j})^4\) possible candidates for \(C\). Note that checking the conditions (i)--(iii) can be done in time polynomial in \(\log (\max_{1 \leq j \leq 4}{a_j})\), which proves the first part of the proposition, since if we ever find such \(B, C\) we can then output them.

To show the second part, we need to check if there is a matrix \(C\) that in addition to~(i)--(iii) above also satisfies \(\det(C) = q\). Note that in this case, there are at most \((\max_{1 \leq j \leq 4}{a_j})^3\) possible candidates for \(C\) since given any three entries of \(C\), the remaining one will be determined by the constraint \(c_1c_4 - c_2c_3 = q\). 
\end{proof}

An alternative approach is given in \cite{Heil23}, where an algorithm is provided to find all \(\Xi\) matrices with a determinant equal to some \(k\), since such matrices are in a one-to-one relation with so-called `\emph{$*$-triples}' studied by Raney \cite{Raney1973OnCF}. However as the author of \cite{Heil23} notes, the algorithm becomes cumbersome for larger \(k\) and indeed the number of \(*\)-triples of \(\det(A)\) grows as fast as the product of \(\det(A)\) and the number of divisors of \(\det(A)\) and therefore such an approach seems prohibitive, although an exact complexity analysis is not given.

In order to find a more efficient procedure, we begin by considering a Smith normal form of elements of \(\Xi\). We define this lemma for some arbitrary nonsingular adj-coprime matrix \(X = \big(\begin{smallmatrix} x_{1} & x_{2} \\ x_{3} & x_{4} \end{smallmatrix}\big) \in \mathbb{N}^{2 \times 2}\) and identify some important properties of each element of the Smith normal form that we will require. The proof is given by considering simple properties of the Smith normal form of nonsingular adj-coprime \(2 \times 2\) nonnegative integer matrices and B\'ezout coefficients.  

\begin{lemma}\label{smithFormLemma}
Let \(X = \big(\begin{smallmatrix} x_{1} & x_{2} \\ x_{3} & x_{4} \end{smallmatrix}\big) \in \mathbb{N}^{2 \times 2}\) be a matrix with \(\gcd(x_{1}x_{4},\, x_{2}x_{3}) = 1\) and \(\det(X) \neq 0\). Then there exists a Smith normal form \(X = U_X^{-1} S_X V_X^{-1}\) given by
\[
U_X = \begin{pmatrix} s & t \\ -x_{3} & x_{1} \end{pmatrix},\
S_X = \begin{pmatrix} 1 & 0 \\ 0 & \det(X) \end{pmatrix},\
V_X = \begin{pmatrix} 1 & \sigma \\ 0 & 1 \end{pmatrix},
\]
where 
\(
s = a + kx_{3},\ t = b - kx_{1}, \text{ and } \sigma=-(ax_{2} + bx_{4}) + k\det(X),
\) 
for any \(k\in \mathbb{Z}\) and a fixed pair \(a, b \in \mathbb{Z}\) that can be computed in polynomial time such that \(ax_{1} + bx_{3} = 1\).

In particular, there exists a Smith normal form with \(0 \leq \sigma < |\det(X)|\).
Moreover, since \(X\) is non-negative, we also have \(x_{1}\sigma +t \det(X) \leq 0\) and \(-x_{3}\sigma + s \det(X) \geq 0\).
\end{lemma}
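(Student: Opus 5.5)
The plan is to verify the claimed factorisation directly. Equivalently, we check that \(U_X X V_X = S_X\), that both \(U_X\) and \(V_X\) have determinant \(1\), and then read off the sign conditions.

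\textbf{Step 1: \(a,b\) exist and \(\det(U_X)=1\).} Since \(\gcd(x_1x_4, x_2x_3)=1\), in particular \(\gcd(x_1,x_3)=1\). The extended Euclidean algorithm therefore produces \(a,b\) with \(ax_1+bx_3=1\) in polynomial time. Then \(\det(U_X) = s x_1 + t x_3 = (a+kx_3)x_1 + (b-kx_1)x_3 = 1\) for every \(k\), and clearly \(\det(V_X)=1\).

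\textbf{Step 2: compute \(U_X X\).} The first row is \((s x_1 + t x_3,\; s x_2 + t x_4) = (1,\; s x_2 + t x_4)\). The second row is \((-x_3x_1 + x_1x_3,\; -x_3x_2 + x_1x_4) = (0, \det(X))\). Multiplying on the right by \(V_X\) gives
\[
U_X X V_X = \begin{pmatrix} 1 & \sigma + s x_2 + t x_4 \\ 0 & \det(X) \end{pmatrix}.
\]
So we need \(\sigma = -(s x_2 + t x_4)\). Expanding,
\[
s x_2 + t x_4 = a x_2 + b x_4 + k(x_3x_2 - x_1x_4) = a x_2 + b x_4 - k\det(X),
\]
which is exactly the stated \(\sigma\). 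Hence \(X = U_X^{-1} S_X V_X^{-1}\). Since \(\gcd\) of all entries is \(1\), this matches the normal form of Proposition~\ref{SNFprop}.

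\textbf{Step 3: the range of \(\sigma\).} As \(k\) varies, \(\sigma\) runs over a full residue class modulo \(|\det(X)|\). We can therefore choose \(k\) so that \(0\le \sigma<|\det(X)|\).

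\textbf{Step 4: the inequalities.} I expect this to be the only mildly delicate point, mainly in keeping the signs straight. We compute \(X = U_X^{-1}S_XV_X^{-1}\) explicitly, using
\[
U_X^{-1}=\begin{pmatrix} x_1 & -t\\ x_3 & s\end{pmatrix}, \qquad V_X^{-1}=\begin{pmatrix}1&-\sigma\\0&1\end{pmatrix}.
\]
Then \(U_X^{-1}S_X=\begin{pmatrix} x_1 & -t\det(X)\\ x_3 & s\det(X)\end{pmatrix}\), and multiplying by \(V_X^{-1}\) gives
\[
X=\begin{pmatrix} x_1 & -x_1\sigma - t\det(X)\\ x_3 & -x_3\sigma + s\det(X)\end{pmatrix}.
\]
Comparing second columns gives \(x_2 = -(x_1\sigma + t\det(X))\) and \(x_4 = -x_3\sigma + s\det(X)\). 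Nonnegativity of \(x_2\) and \(x_4\) then yields exactly \(x_1\sigma + t\det(X)\le 0\) and \(-x_3\sigma+s\det(X)\ge 0\).

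These identities hold for every \(k\), so in particular for the choice of \(k\) with \(0\le\sigma<|\det(X)|\).
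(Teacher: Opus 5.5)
Your proof is correct and follows essentially the paper's route. You take B\'ezout coefficients for the coprime pair $x_1,x_3$, verify the factorisation by a direct $2\times 2$ computation, and read the inequalities off the second column of $U_X^{-1}S_XV_X^{-1}$. The only differences are cosmetic: you check $U_XXV_X=S_X$ rather than simplifying $U_X^{-1}S_XV_X^{-1}$ entrywise, and you explicitly justify two points the paper leaves implicit, namely $\gcd(x_1,x_3)=1$ and the choice of $k$ giving $0\le\sigma<|\det(X)|$.
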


\begin{proof}
Since \(\gcd(x_1, x_3) = 1\), there exists a pair \(a, b \in \mathbb{Z}\) such that \(ax_1 + bx_3 = 1\), where \(a, b\) are of course the B\'ezout coefficients for \(x_1, x_3\). These coefficients are not unique, however there exist two such pairs \(a,b\) where \(|a| \leq |x_3|\) and \(|b| \leq |x_1|\). Given one such pair (which can be computed in polynomial time, for example, via the Euclidean algorithm), the set of all B\'ezout coefficients is then given by \(\{(a + kx_3, b - kx_1)\ | \ k \in \mathbb{Z}\}\). Hence we define \(s = a + kx_3\) and \(t = b - kx_1\), where \(k \in \mathbb{Z}\).

We see that \(\det(U_X) = \det(V_X) = 1\); thus \(U_X, V_X\) are unimodular, and 
\(\det(S_X) = \det(X)\) as required. Trivially, \(U_X^{-1} = \frac{1}{sx_1 + tx_3}\big(\begin{smallmatrix} x_1 & -t \\ x_3 & s\end{smallmatrix}\big) = \big(\begin{smallmatrix} x_1 & -t \\ x_3 & s\end{smallmatrix}\big)\) and \(V_X^{-1} = \big(\begin{smallmatrix} 1 & -\sigma \\ 0 & 1\end{smallmatrix}\big)\), since \(sx_1 + tx_3 = 1\). We therefore see that \(S^{}_XV_X^{-1} = \big(\begin{smallmatrix} 1 & -\sigma \\ 0 & \det(X)\end{smallmatrix}\big)\) and thus
\begin{eqnarray*}
U_X^{-1} S^{}_X V_X^{-1}   &=& \begin{pmatrix} x_1 & -t \\ x_3 & s \end{pmatrix} \begin{pmatrix} 1 & -\sigma \\ 0 & \det(X) \end{pmatrix} \\
  &=& \begin{pmatrix} x_1 & -(x_1\sigma +t \det(X)) \\ x_3 & -x_3\sigma + s \det(X) \end{pmatrix}.
\end{eqnarray*}
Note that
\begin{align*}
\sigma &= -(ax_2 + bx_4) + k\det(X)\\
&= -(ax_2 + bx_4) + k(x_1x_4 - x_2x_3)\\
&= -((a+kx_3)x_2 + (b-kx_1)x_4) = -(sx_2 + tx_4).
\end{align*}
Recalling that \(sx_1 + tx_3 = 1\) and \(\det(X) = x_1x_4-x_2x_3\), we see that 
\begin{eqnarray*}
    -(x_1\sigma +t \det(X)) & = & x_1(sx_2 + tx_4) - t(x_1x_4-x_2x_3) \\
    & = & x_2 (sx_1 + tx_3) = x_2.
\end{eqnarray*}
Similarly, 
\begin{eqnarray*}
-x_3\sigma + s \det(X) & = & x_3(sx_2 + tx_4) + s (x_1x_4-x_2x_3) \\
 & = & x_4(sx_1 + tx_3) = x_4,
\end{eqnarray*}
and thus \(U_X^{-1} S^{}_X V_X^{-1}= X\) as required. Finally, notice that \(X \in \mathbb{N}^{2 \times 2}\) implies \(x_2, x_4 \geq 0\) and hence \(x_1\sigma +t \det(X) \leq 0\) and \(-x_3\sigma + s \det(X) \geq 0\).
\end{proof}

\begin{example}
We give here an application of Lemma~\ref{smithFormLemma}. Consider a matrix \(X = \big(\begin{smallmatrix} 43 & 31 \\ 24 & 35 \end{smallmatrix}\big)\). Clearly then \(\gcd(43\cdot35, 31\cdot 24) = 1\) and thus we can apply the lemma. We give two different Smith normal forms of \(X\) according to whether \(\sigma\) is positive or negative (taking the minimal such positive or negative representatives):
\begin{eqnarray*}
U^-_X = \begin{pmatrix} -5 & 9 \\ -24 & 43 \end{pmatrix}, &
S^-_X = \begin{pmatrix} 1 & 0 \\ 0 & 761 \end{pmatrix}, &
V^-_X = \begin{pmatrix} 1 & -160 \\ 0 & 1 \end{pmatrix}; \\
U^+_X = \begin{pmatrix} 19 & -34 \\ -24 & 43 \end{pmatrix}, &
S^+_X = \begin{pmatrix} 1 & 0 \\ 0 & 761 \end{pmatrix}, &
V^+_X = \begin{pmatrix} 1 & 601 \\ 0 & 1 \end{pmatrix}.
\end{eqnarray*}
Notice that \(\det(C) = 761\), \(\sigma \equiv 601 \text{ mod } 761\), \(U^{\pm 1}_X\) and \(V^{\pm 1}_X\) are unimodular, and also that 
\(
X = (U_X^-)^{-1} S_X^- (V_X^-)^{-1} = (U_X^+)^{-1} S_X^+ (V_X^+)^{-1}
\), as required.
\end{example}

The next technical lemma is crucial to our approach. Consider that we are given \(A \in \Xi\) with a composite determinant, and thus \(A \in \mathbb{N}^{2 \times 2}\) is a non-singular non-dominated adj-coprime matrix. Then the lemma says that if there exists some factorisation \(A = BC\) with \(B, C \in \mathbb{N}^{2 \times 2}\) and we have a priori knowledge of the determinant of \(C\), without knowing \(C\) itself, then we can determine matrices \(V_C\) and \(S_C\) from the Smith normal form of \(C\). 

\begin{lemma}\label{Vclemma}
There exists a polynomial-time procedure that, given a matrix \(A \in \Xi\) with a composite determinant and integers  \(p,q\) such that \(|p|, |q| \geq 2\) and \(\det(A) = pq\), either
\begin{enumerate}[(i)]
\item determines that no matrices \(B, C \in \mathbb{N}^{2 \times 2}\) with \(\det(B) = p\) and \(\det(C) = q\) satisfy \(A = BC\), or

\item constructs a matrix \(S_C \in \mathbb{Z}^{2 \times 2}\) and \(\sigma \in \mathbb{N}\) such that if there are matrices \(B, C \in \mathbb{N}^{2 \times 2}\) satisfying \(\det(B) = p\), \(\det(C) = q\) and \(A = BC\), then \(C\) has a  Smith normal form \(U_C, S_C, V_C\) as in Lemma~\ref{smithFormLemma}, where  \(V_C = \big(\begin{smallmatrix} 1 & \sigma \\ 0 & 1 \end{smallmatrix}\big)\) and \(0 \leq \sigma < |\det(C)|\).
\end{enumerate}
\end{lemma}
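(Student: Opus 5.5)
The plan is to read off \(\sigma\) from the \emph{row lattice} of \(C\), and to compute it from \(A\) alone by a single B\'ezout computation on the first column of \(A\).

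\textbf{Step 1: \(C\) satisfies the hypotheses of Lemma~\ref{smithFormLemma}.} Write \(A=\big(\begin{smallmatrix} a_1 & a_2 \\ a_3 & a_4 \end{smallmatrix}\big)\), and suppose \(A=BC\) with \(B,C\in\mathbb{N}^{2\times 2}\), \(\det(B)=p\) and \(\det(C)=q\). Then \(\det(C)=q\neq 0\). From \(a_1=b_1c_1+b_2c_3\) and \(a_3=b_3c_1+b_4c_3\), the number \(\gcd(c_1,c_3)\) divides both \(a_1\) and \(a_3\). Since \(A\) is adj-coprime, \(\gcd(a_1,a_3)=1\), so \(\gcd(c_1,c_3)=1\). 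The proof of Lemma~\ref{smithFormLemma} uses only this coprimality and nonsingularity, not full adj-coprimality of \(C\). Hence \(C\) has Smith normal forms \(C=U_C^{-1}S_CV_C^{-1}\) with \(S_C=\big(\begin{smallmatrix} 1 & 0 \\ 0 & q \end{smallmatrix}\big)\) and \(V_C=\big(\begin{smallmatrix} 1 & \sigma \\ 0 & 1 \end{smallmatrix}\big)\). As \(k\) ranges over \(\mathbb{Z}\), \(\sigma\) ranges over a full residue class modulo \(q\), so one representative satisfies \(0\le\sigma<|q|\). It therefore suffices to determine \(\sigma \bmod q\) from \(A\) and \(q\) alone.

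\textbf{Step 2: describe the row lattice of \(C\).} Since \(U_C\) is unimodular, the \(\mathbb{Z}\)-span of the rows of \(C\) equals the \(\mathbb{Z}\)-span of the rows of \(S_CV_C^{-1}=\big(\begin{smallmatrix} 1 & -\sigma \\ 0 & q \end{smallmatrix}\big)\). That span is exactly
\[
L_\sigma=\{(x,y)\in\mathbb{Z}^2 : y\equiv -\sigma x \pmod q\}.
\]
Because \(B\) is an integer matrix, each row of \(A=BC\) lies in \(L_\sigma\). This gives the two necessary congruences
\[
a_2\equiv-\sigma a_1 \pmod q \qquad\text{and}\qquad a_4\equiv -\sigma a_3 \pmod q .
\]

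\textbf{Step 3: solve the congruences.} Compute by the extended Euclidean algorithm integers \(\alpha,\beta\) with \(\alpha a_1+\beta a_3=1\). Multiplying the two congruences by \(\alpha\) and \(\beta\) and adding forces
\[
\sigma\equiv -(\alpha a_2+\beta a_4)\pmod q.
\]
So \(\sigma\) is uniquely determined modulo \(q\), independently of which \(B,C\) exist. Let \(\sigma\) be its representative in \([0,|q|)\).

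\textbf{Step 4: the procedure.} The procedure now checks whether \(a_1\sigma+a_2\equiv 0\) and \(a_3\sigma+a_4\equiv 0 \pmod q\). If either check fails, no admissible \(B,C\) exist, and it outputs case~(i). Otherwise it outputs \(S_C\) and \(\sigma\) as in case~(ii). By Step~1, any valid \(C\) admits a Smith normal form of the shape in Lemma~\ref{smithFormLemma} with exactly this \(V_C\), obtained by choosing \(k\) to shift \(\sigma\) into \([0,|q|)\). All steps are extended-gcd and modular arithmetic on numbers of bit size polynomial in the input, so the procedure runs in polynomial time.

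\textbf{Main obstacle.} The delicate point is Step~1 combined with the normalisation. The Smith form of Lemma~\ref{smithFormLemma} for \(C\) is built from B\'ezout coefficients of \(C\)'s own first column, which we do not know. We must argue that the resulting \(\sigma\) is nevertheless an invariant of the row lattice of \(C\), modulo \(q\), so that it can be computed from \(A\). Some care is also needed with the sign of \(q\) when \(\det(C)<0\), so that "\(0\le\sigma<|\det(C)|\)" and the congruences are interpreted consistently.
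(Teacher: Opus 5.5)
Your proof is correct and follows essentially the paper's route. Requiring the rows of \(A\) to lie in the row lattice of \(C\) is exactly the paper's requirement that \(AV_CS_C^{-1}=BU_C^{-1}\) be integral, so you get the same two congruences \(a_2+\sigma a_1\equiv a_4+\sigma a_3\equiv 0 \pmod{\det(C)}\) and then check their common solvability. Your variations are sound: you solve via a B\'ezout combination \(\alpha a_1+\beta a_3=1\) instead of inverting \(a_1\) and \(a_3\) separately modulo \(\det(C)\), and you explicitly derive \(\gcd(c_1,c_3)=1\) so that the construction of Lemma~\ref{smithFormLemma} genuinely applies to \(C\). That last point is left implicit in the paper, which only checks \(\gcd(c_1,c_2,c_3,c_4)=1\), even though \(C\) need not itself be adj-coprime.
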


\begin{proof}
Consider a matrix \(A \in \Xi\) and
suppose it has a factorisation \(A = BC\) for some matrices \(B, C \in \mathbb{N}^{2 \times 2}\), where \(A = \big( \begin{smallmatrix} a_1 & a_2 \\ a_3 & a_4 \end{smallmatrix} \big)\), \(B = \big( \begin{smallmatrix} b_1 & b_2 \\ b_3 & b_4 \end{smallmatrix} \big)\) and \(C = \big( \begin{smallmatrix} c_1 & c_2 \\ c_3 & c_4 \end{smallmatrix} \big)\). Then \(C\) must have the property \(\gcd(c_1,c_2,c_3,c_4) = 1\). Indeed, if there is \(d>1\) such that \(d\, |\, c_j\) for \(1\leq j\leq 4\), then \(d\, |\, a_j\) for \(1\leq j\leq 4\), which contradicts the assumption that \(A\) is adj-coprime.
Therefore, by Proposition \ref{SNFprop}, the matrix \(S_C\) in the Smith normal form of \(C\) has the form \(S_C = \big( \begin{smallmatrix} 1 & 0 \\ 0 & \det(C) \end{smallmatrix} \big)\). More difficult is to determine \(V_C = \big( \begin{smallmatrix} 1 & \sigma \\ 0 & 1 \end{smallmatrix} \big)\), since we do not know \(B\) or \(C\), only \(\det(C)\), and we need to determine the possible value of \(\sigma\) such that \(0 \leq \sigma < |\det(C)|\) under the assumption that \(A = BC\) holds for some \(B, C \in \mathbb{N}^{2 \times 2}\) with \(\det(B) = p\) and \(\det(C) = q\). 

Since \(A=BC\) and \(C\) is nonsingular (given that \(\det(A) \neq 0\)), we have \(AC^{-1} = B \in \mathbb{N}^{2 \times 2}\). Matrix \(C\) has a Smith normal form and can thus be written \(C = U_C^{-1}S_CV_C^{-1}\) with \(U_C, S_C, V_C \in \mathbb{Z}^{2 \times 2}\) and \(U_C, V_C\) are unimodular, 
and thus \(AC^{-1} = AV_CS_C^{-1}U_C = B\). This implies that \(AV_CS_C^{-1} = BU_C^{-1} \in \mathbb{Z}^{2 \times 2}\). Note that \(BU_C^{-1}\) must be an integer matrix but not necessarily nonnegative. We thus consider constraints to make \(AV_CS_C^{-1}\) integer, recalling that we know \(A\) and \(S_C\) and that \(V_C = \big( \begin{smallmatrix} 1 & \sigma \\ 0 & 1 \end{smallmatrix} \big)\) for some unknown \(\sigma \in \mathbb{Z}\).

To determine \(\sigma\), notice that
\[
AV_CS_C^{-1} = \begin{pmatrix} a_1 & \det(C)^{-1}(a_2 + \sigma a_1) \\ a_3 & \det(C)^{-1}(a_4 + \sigma a_3) \end{pmatrix}.
\]
In order that \(AV_CS_C^{-1} \in \mathbb{Z}^{2 \times 2}\), i.e., to ensure that it has integer rather than rational elements, we must therefore have that
\begin{eqnarray}
    a_2 + \sigma a_1 & \equiv & 0 \mod \det(C), \text{ and } \label{eeq1} \\
    a_4 + \sigma a_3 & \equiv & 0 \mod \det(C). \label{eeq2}
\end{eqnarray}
Each congruence has a unique solution if \(\gcd(a_1, \det(C)) = \gcd(a_3, \det(C)) = 1\). This is true as we now show. 
 Since \(\det(C)\, |\, \det(A)\), 
 we see that 
\[ \gcd(a_1, c_1c_4-c_2c_3) \leq \gcd(a_1, a_1a_4-a_2a_3).
\]
Assume by contradiction that \(\gcd(a_1, a_1a_4-a_2a_3) = d > 1\), then \(d | a_1\) and \(d | a_1a_4-a_2a_3\), thus
\(d | a_2a_3\) giving a contradiction since \(\gcd(a_1a_4,a_2a_3) = 1\) by definition of \(\Xi\). Therefore, \(\gcd(a_1, c_1c_4-c_2c_3) = \gcd(a_1, a_1a_4-a_2a_3) = 1\).

If both congruences yield the same solution \(\bmod \det(C)\), then we have determined a unique \(\sigma\) such that \(0 \leq \sigma < |\det(C)|\), otherwise no \(C \in \mathbb{N}^{2 \times 2}\) exists satisfying the constraint that \(B\) is an integer matrix.

Finally, notice that we can decide in polynomial time whether the system of congruences (\ref{eeq1}) and (\ref{eeq2}) has a solution. Hence the whole procedure runs in polynomial time.
\end{proof}

We now move to the main theorem of the paper, which states that if we are given a matrix \(A \in \Xi\) and a value \(c \in \mathbb{N}\), then we can determine if it can be factored as \(A=BC\) with \(B, C \in \mathbb{N}^{2 \times 2}\) and \(\det(C) = c\) in time \(\tilde{O}(\mu(A))\), i.e., in time that is linear in the minimal element of \(A\). In order to prove this, we will make use of the following lemma which states that our assumption that \(c > 0\) does not actually restrict the generality and that neither \(B\) nor \(C\) contains any zero elements. This will simplify several parts of the main theorem.

\begin{lemma}\label{posDetNonNegLem}
Let \(A = \big(\begin{smallmatrix} a_1 & a_2 \\ a_3 & a_4 \end{smallmatrix}\big) \in \Xi\) and let \(c \in \mathbb{Z}\) be a value satisfying \(|c| \leq |\det(A)|\). If there exists a factorisation \(A = BC\) with \(B = \big(\begin{smallmatrix} b_1 & b_2 \\ b_3 & b_4 \end{smallmatrix}\big) \in \mathbb{N}^{2\times 2}\) and \(C = \big(\begin{smallmatrix} c_1 & c_2 \\ c_3 & c_4 \end{smallmatrix}\big) \in \mathbb{N}^{2\times 2}\) such that \(|\det(C)| = |c|\), then
\begin{itemize}
\item we may assume without loss of generality that \(\det(C) > 0\), and
\item if \(|c| < |\det(A)|\) then \(B\) and \(C\) must be strictly positive, i.e., \(b_j, c_j > 0\) for \(1 \leq j \leq 4\).
\end{itemize}
\end{lemma}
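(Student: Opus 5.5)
We treat the two bullets separately. For the first, the idea is to absorb the sign of \(\det(C)\) into the unit \(J\). If \(A = BC\) with \(\det(C) < 0\), we write \(A = (BJ)(JC)\), using \(J^2 = I\). Both \(BJ\) and \(JC\) are again in \(\mathbb{N}^{2\times 2}\), since \(J\) merely permutes columns of \(B\) and rows of \(C\). Moreover \(\det(JC) = -\det(C) > 0\) and \(|\det(JC)| = |c|\). Finally, \(BJ\) is a unit iff \(B\) is, and \(JC\) is a unit iff \(C\) is, because \(\Pi_2 = \{I,J\}\) is closed under multiplication by \(J\). So any factorisation can be replaced by one with \(\det(C) > 0\) without changing the relevant properties, which justifies the ``without loss of generality''.

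For the second bullet, assume \(|c| < |\det(A)|\), so \(|\det(B)| = |\det(A)|/|c| \geq 2\) and in particular \(B\) is not a unit (all units have determinant \(\pm 1\)). We also use that \(|\det(C)| = |c| \geq 1\), since \(\det(A) \neq 0\) as \(A\) is nonsingular (\(\Xi\)-matrices are adj-coprime, hence have nonzero determinant). The plan is to argue by contradiction: suppose some entry of \(B\) or \(C\) is zero, and derive that \(A\) is dominated or not adj-coprime.

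The key observation is that \(A = BC\) expresses the rows of \(A\) as nonnegative combinations of the rows of \(C\), and the columns of \(A\) as nonnegative combinations of the columns of \(B\). Suppose first that \(C\) has a zero entry, say \(c_3 = 0\) (the other positions follow by symmetry, using the associates in Equation~(\ref{assocEqn})). Then \(c_1, c_4 \ge 1\) because \(\det(C) = c_1c_4 \neq 0\). The first column of \(A\) is \((b_1c_1, b_3c_1)^T\), so \(c_1\) divides both \(a_1\) and \(a_3\).
\begin{itemize}
\item If \(c_1 > 1\), this contradicts adj-coprimality: \(\gcd(a_1,a_3) = 1\).
\item If \(c_1 = 1\), then \(a_2 = b_1c_2 + b_2c_4 \ge b_1c_2\), \(a_4 \ge b_3c_2\), and \(a_1 = b_1\), \(a_3 = b_3\).
\end{itemize}
In the second case, if \(c_2 \ge 1\) we get \(a_2 \ge a_1\) and \(a_4 \ge a_3\), so the second column dominates the first and \(A\) is dominated. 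If \(c_2 = 0\), then \(C = \mathrm{diag}(1, c_4)\), and \(c_4\) divides \(a_2\) and \(a_4\); adj-coprimality forces \(c_4 = 1\), so \(C = I\), and then \(|c| = 1\).

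This last sub-case is where the hypotheses must be handled with care. If \(C = I\) were allowed, the non-unit factorisation hypotheses implicit in the statement would be violated. Alternatively, \(|c| = 1\) could be excluded by the applications, where \(c\) is a nontrivial divisor of \(\det(A)\), or resolved via Proposition~\ref{prop:singular/unimodular matrices}. I expect to need to make explicit that \(C\) is a non-unit, and I would do so. With that, a zero entry in \(C\) forces \(A\) to be dominated or non-adj-coprime, contradicting \(A \in \Xi\).

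The symmetric argument handles a zero entry in \(B\), working with rows. Suppose \(b_3 = 0\). Then \(b_4\) divides \(a_3\) and \(a_4\), so \(b_4 = 1\) by adj-coprimality. The first row then satisfies \(a_1 \ge c_3 b_2\cdot\) (more precisely \(a_1 = b_1c_1 + b_2 c_3 \ge c_1'\)), and the cleaner way to see the comparison is entrywise: \(a_1 \ge b_2c_3 = b_2 a_3\) and \(a_2 \ge b_2 a_4\). If \(b_2 \ge 1\), the first row dominates the second, so \(A\) is dominated. If \(b_2 = 0\), then \(B\) is diagonal with \(b_4 = 1\), and adj-coprimality forces \(b_1 = 1\), so \(B = I\). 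That contradicts \(|\det(B)| \ge 2\); this is exactly where \(|c| < |\det(A)|\) is used.

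The main obstacle is the bookkeeping over which zero position occurs. Reducing to one representative case via associates needs care, since replacing \(A\) by \(JA\), \(AJ\) or \(JAJ\) changes the factorisation to \(JB\cdot C\), \(B\cdot CJ\), and so on. I would check that each of the four zero positions in \(B\), and in \(C\), maps to the treated one under these transformations, using that \(\Xi\) is closed under associates.
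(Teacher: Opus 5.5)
Your argument is essentially the paper's: a zero entry forces its row- or column-partner to be $1$ by adj-coprimality, and non-domination then forces the factor to be a unit. The paper treats $b_4=0$ and gets the claim for $C$ by transposing to $A^T=C^TB^T$, whereas you argue on the columns of $C$ directly. For your symmetry reduction, note that associates of $A$ only permute the rows of $B$ and the columns of $C$. Moving a zero between the rows of $C$, or between the columns of $B$, also needs the re-bracketing $A=(BJ)(JC)$ from your first bullet.

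Your flag on $C=I$ is correct, and it is a defect in the statement rather than in your proof. For any $A\in\Xi$ with $|\det(A)|\ge 2$, the factorisation $A=A\cdot I$ with $c=1$ contradicts the second bullet as written. The paper's transposition step silently needs $|c|>1$, the mirror image of $|c|<|\det(A)|$. So assuming $C$ is a non-unit (or simply $|c|\ge 2$) is the right repair, and it costs nothing in Theorem~\ref{singleDetLem}, where $1<c<\det(A)$.
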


\begin{proof}
    Under the assumptions of the lemma, consider the case when \(\det(C) < 0\). Let \(J = \big(\begin{smallmatrix} 0 & 1 \\ 1 & 0 \end{smallmatrix}\big)\). Clearly \(J^2 = I\). Therefore, we can write
\[
A = BC = (BJ)(JC),
\]
where we note that \(\det(JC) = -\det(C)\) since \(\det(J) = -1\). 
Therefore, we may assume that \(C\) has a positive determinant as left multiplying \(C\) by \(J\) swaps the rows of \(C\) and does not change the sign of its elements.

Next we want to prove that both \(B\) and \(C\) are necessarily strictly positive if \(|c| < |\det(A)|\). To see this, assume that \(b_4 = 0\). We will show that in this case \(b_3 = 1\). Indeed, if \(b_3 = 0\), then \(B\) and hence \(A\) will be singular. On the other hand, and if \(b_3 > 1\), then we can factorise \(B\) as
\[
\left(\begin{array}{cc} b_1 & b_2 \\ b_3 & 0 \end{array}\right) = \left(\begin{array}{cc} 1 & 0 \\ 0 & b_3 \end{array}\right) \left(\begin{array}{cc} b_1 & b_2 \\ 1 & 0 \end{array}\right),
\]
which implies that \(A = \big(\begin{smallmatrix} 1 & 0 \\ 0 & b_3 \end{smallmatrix}\big)B'C\) for some \(B', C \in \mathbb{N}^{2 \times 2}\). This implies that \(A\) is not adj-coprime since \(b_3 > 1\) and \(b_3 | \gcd(a_3,a_4)\). Therefore, \(b_3 = 1\), and the matrix \(A\) can be written as
\[
A = BC = \left(\begin{array}{cc} b_1 & b_2 \\ 1 & 0 \end{array}\right) \left(\begin{array}{cc} c_1 & c_2 \\ c_3 & c_4 \end{array}\right) = 
\left(\begin{array}{cc} b_1c_1+b_2c_3 & b_1c_2+b_2c_4 \\ c_1 & c_2 \end{array}\right).
\]
The top row of this matrix is larger than the bottom if \(b_1>0\), implying \(A\) is dominated and thus \(A \not\in \Xi\). We therefore must have \(b_1=0\) and \(A = \big(\begin{smallmatrix} b_2c_3 & b_2c_4 \\ c_1 & c_2 \end{smallmatrix}\big)\). Note that \(b_2 | \gcd(a_1,a_2)\), and hence \(b_2=1\) by the adj-coprime property of \(A\). This implies that \(B = J\) and \(|c| = |\det(A)|\), contradicting our assumption.

The other three cases (when \(b_1, b_2, b_4 = 0\)) are similar, deriving first that the other element on the same row must be \(1\) (since otherwise the matrix is either singular or else can be factored by one of \(\{\big(\begin{smallmatrix} 1 & 0 \\ 0 & b_4 \end{smallmatrix}\big), \big(\begin{smallmatrix} b_1 & 0 \\ 0 & 1 \end{smallmatrix}\big), \big(\begin{smallmatrix} b_2 & 0 \\ 0 & 1 \end{smallmatrix}\big)\}\), thus proving that \(A\) is not adj-coprime), and then reasoning that a row composed of a \(0\) and a \(1\) implies that \(A\) is dominated.

Finally we must prove that \(C\) is also strictly positive. Rather than going over four additional cases, we may recall that \(A = BC\) implies that \(A^T = C^T B^T\). Note that transposing \(A\) does not affect its primality and that \(\Xi\) is invariant under transposition. Hence by using the same argument, we can show that \(C\) does not have any zero elements.
\end{proof}

We are ready to prove our main result, now knowing that the determinant of matrix \(C\) can be assumed positive.

\begin{theorem}\label{singleDetLem}
    Given a matrix \(A = \big(\begin{smallmatrix} a_1 & a_2 \\ a_3 & a_4 \end{smallmatrix}\big) \in \Xi\) and \(c \in \mathbb{N}\) such that \(c\, | \det(A)\), we can determine  whether there exist non-unit matrices \(B, C \in \mathbb{N}^{2 \times 2}\) such that \(\det(C) = c\) and \(A=BC\), and we can compute such (non-unique) \(B, C\) if they exist, in time \(\tilde{O}(\mu(A))\).
\end{theorem}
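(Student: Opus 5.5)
The plan is to combine Lemma~\ref{posDetNonNegLem}, Lemma~\ref{Vclemma} and Lemma~\ref{smithFormLemma}, so that the search for $C$ reduces to a one-parameter enumeration of length $\mu(A)$. The parameter will be the entry $c_1$ of $C$, and the work for each value should be polylogarithmic.

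\textbf{Preprocessing.} If $c=1$ or $c=|\det(A)|$, the question is whether $B$ or $C$ must be a unit. For $c=1$, Proposition~\ref{prop:singular/unimodular matrices} says $C$ is a unit or dominated, and I expect a short separate argument to settle this case. So assume $1<c<|\det(A)|$. By Lemma~\ref{posDetNonNegLem}, $B$ and $C$ are then strictly positive.

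We also want $\mu(A)=a_3$. If the minimum lies in the other row, pass to $JA=(JB)C$, which keeps $C$ and $c$ unchanged. If it lies in the other column, pass to $A^T=C^TB^T$ and rerun the same analysis with the roles of the two factors exchanged; $\det(A)/c$ divides $\det(A)$, so the problem has the same shape. I expect this to cost only a constant factor.

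\textbf{Reducing $C$ to a few unknowns.} Apply Lemma~\ref{Vclemma} with $q=c$. In polynomial time it either rules out every factorisation or produces $S_C=\mathrm{diag}(1,c)$ and a unique $\sigma$ with $0\le\sigma<c$. The integers $d_2=(a_2+\sigma a_1)/c$ and $d_4=(a_4+\sigma a_3)/c$ are then known.

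By Lemma~\ref{smithFormLemma}, any admissible $C$ has the form
\[
C=\begin{pmatrix} c_1 & -(c_1\sigma+tc)\\ c_3 & -c_3\sigma+sc\end{pmatrix}, \qquad sc_1+tc_3=1 .
\]
From $B=AV_CS_C^{-1}U_C$ we get
\[
B=\begin{pmatrix} a_1s-d_2c_3 & a_1t+d_2c_1\\ a_3s-d_4c_3 & a_3t+d_4c_1\end{pmatrix}.
\]
Since all entries of $B$ are positive and $a_3=b_3c_1+b_4c_3$, we get $c_1+c_3\le a_3=\mu(A)$. So only $c_1\in\{1,\dots,\mu(A)\}$ needs to be enumerated. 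For each fixed $c_1$, the unknowns are $c_3$ and $t$ (with $s=(1-tc_3)/c_1$), subject to the congruence $tc_3\equiv 1 \pmod{c_1}$ and eight positivity constraints.

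\textbf{Per-$c_1$ feasibility test.} This is the main obstacle. Four of the constraints ($b_2,b_4>0$ and $c_2>0$, plus the bound $c_2\le a_2$) are linear in $t$ alone, so they confine $t$ to an explicit interval $[t_-,t_+]$. The other four ($b_1,b_3,c_4>0$) involve the product $tc_3$.

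Substituting $s$, the constraint $b_3>0$ becomes $a_3(1-tc_3)>d_4c_1c_3$, which for fixed $c_1$ is monotone in $t$ once $c_3$ is fixed. The same holds for $b_1$ and $c_4$. So I would first try to show that feasibility can be decided by testing only the extremal admissible $t$ in the residue class $c_3^{-1} \bmod c_1$.

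That would still leave an enumeration over $c_3$, giving $\mu(A)^2$ total work, which is too much. I would therefore look for a better parametrisation. Writing the two positive rows of $B$ against the fixed first column $(a_1,a_3)^T=B(c_1,c_3)^T$ should show that $c_3$ is determined modulo $c_1$ (or lies in $O(1)$ candidates) by the integrality of $s$ together with the upper bound $c_3\le a_3-c_1$. The alternative is to reduce the per-$c_1$ problem to finding a lattice point of $\{(c_3,t): tc_3\equiv1\ (\mathrm{mod}\ c_1)\}$ in a region bounded by lines and a hyperbola, which is solvable by a continued-fraction/Euclid-type step in polylogarithmic time.

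\textbf{Output.} If some $c_1$ yields a feasible $(c_3,t)$, output $B$ and $C$ as above. Both are non-units because $1<c<|\det(A)|$. The total cost is $\mu(A)$ iterations of polylogarithmic work, i.e.\ $\tilde{O}(\mu(A))$.
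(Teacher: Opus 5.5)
Your reduction is set up correctly: the expressions for $B$ and $C$ in terms of $(c_1,c_3,t)$, the bound $c_1+c_3\le a_3$, and the plan to enumerate $c_1$ are all fine. However, the proof stops exactly at the step that is supposed to give the $\tilde O(\mu(A))$ bound. You call the per-$c_1$ feasibility test the main obstacle and concede that your concrete method still enumerates $c_3$, costing $\mu(A)^2$. You then only sketch two hoped-for remedies: $c_3$ being confined to $O(1)$ candidates, or a continued-fraction lattice-point search under a hyperbola. Neither is carried out or justified, so as written the argument does not prove the theorem.

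\textbf{The missing observation.} For fixed $c_1$, the unknown $t$ is not merely confined to an interval; it is uniquely determined.
\begin{itemize}
\item You already used $a_3=b_3c_1+b_4c_3$ with $b_3,c_3\ge 1$. The same identity gives $1\le b_4\le a_3-c_1<a_3$.
\item Since $b_4=a_3t+d_4c_1$, this forces $b_4=d_4c_1 \bmod a_3$, with no solution if that residue is $0$. Hence $t=(b_4-d_4c_1)/a_3$ is fixed.
\item Likewise $0<b_2<a_1$ gives a consistency check on $t$.
\end{itemize}
With $c_1$ and $t$ fixed, $b_2,b_4,c_2$ are constants. The remaining entries
\[
b_1=\frac{a_1-b_2c_3}{c_1},\qquad b_3=\frac{a_3-b_4c_3}{c_1},\qquad c_4=\frac{c+c_2c_3}{c_1}
\]
are affine in $c_3$. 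So positivity becomes an interval for $c_3$, and integrality of all entries reduces to $tc_3\equiv 1 \pmod{c_1}$. Feasibility is then the question whether a residue class meets an interval, which takes polylogarithmic time.

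This is the mirror image of the paper's proof. The paper enumerates $c_3$ instead and shows that $s=\lceil c_3d_4/a_3\rceil$ (in your notation) is forced, because $b_3=a_3s-d_4c_3$ must lie strictly between $0$ and $a_3$; otherwise $b_4\le 0$. After that, $c_1$ ranges over an interval intersected with congruences modulo $c_3$.

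\textbf{A smaller omission.} You leave the case $c=1$ to ``a short separate argument''. The paper settles it with Theorem~\ref{detOneAtoms}. If $C\neq I$ has determinant $1$, it is a product of $P$'s and $Q$'s, so $C=C'P$ or $C=C'Q$ with $C'\in\mathbb{N}^{2\times 2}$. Then $A=(BC')P$ or $A=(BC')Q$, which makes $A$ dominated, a contradiction. The case $c=\det(A)$ is symmetric, with $P$ or $Q$ split off on the left of $B$.
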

\begin{proof}
 {\bf Dealing with case \(c = 1\) or \(c = \det(A)\)}.  Firstly, we show that if \(c=1\) or \(c=\det(A)\), then \(A\) has only a trivial factorisation of the required form, that is, if \(A = BC\) for some matrices \(B, C \in \mathbb{N}^{2 \times 2}\) with \(\det(C) = c\), then \(C=I\) or \(C=A\), respectively. Consider the case when \(c=\det(A)\) since the other case is analogous. Suppose \(A = BC\), where \(B, C \in \mathbb{N}^{2 \times 2}\) and \(\det(C) = \det(A)\); thus \(\det(B) = 1\). If \(B\neq I\), then by Theorem~\ref{detOneAtoms} it is equal to a product of primes \(
P = \big(\begin{smallmatrix} 1 & 0 \\ 1 & 1 \end{smallmatrix}\big) \text{ and }
Q = \big(\begin{smallmatrix} 1 & 1 \\ 0 & 1 \end{smallmatrix}\big)
\). Hence \(A\) can be written as a product \(A = PB'C\) or \(A = QB'C\) for some \(B' \in \mathbb{N}^{2 \times 2}\). It is not hard to see that in this case \(A\) is dominated, a contradiction with the assumption that \(A \in \Xi\). Thus we assume for the remainder of the proof that \(1 < c < \det(A)\).

\noindent {\bf Reducing the number of unknowns in \(B\) to three}. We may assume that \(\mu(A) = a_3\) by Equation~(\ref{assocEqn}). Let \(U_C, S_C, V_C\) denote a Smith normal form of (unknown) matrix \(C\) such that \(U_C\) and  \(V_C\) are unimodular, \(U_C, S_C, V_C \in \mathbb{Z}^{2 \times 2}\) and \(C = U_C^{-1}S_C V_C^{-1}\). Since we are given \(A \in \Xi\) and \(c = \det(C)\), if there exists \(A = BC\) with \(B \in \mathbb{N}^{2 \times 2}\) and \(\det(C) = c\), then we can determine \(S_C\) and \(V_C\) by Lemma~\ref{Vclemma}. We therefore have:
\[
U_C = \begin{pmatrix} s & t \\ -c_3  & c_1  \end{pmatrix}, \quad
S_C = \begin{pmatrix} 1 & 0 \\ 0 & \det(C) \end{pmatrix}, \quad
V_C = \begin{pmatrix} 1 & \sigma \\ 0 & 1 \end{pmatrix},
\]
with \(0 \leq \sigma < |\det(C)|\), recalling that \(U_C\) remains unknown. Since \(U_C\) is unimodular, then \(U_C = \left(\begin{smallmatrix} s & -\frac{(c_1 s - 1)}{c_3 } \\ -c_3  & c_1  \end{smallmatrix}\right)\). We may therefore derive
\begin{eqnarray}
C = U_C^{-1}S_C V_C^{-1} = \begin{pmatrix} c_1  & \frac{\det(C)(c_1 s - 1)}{c_3 } - c_1 \sigma \\ c_3  & \det(C)s - c_3 \sigma \end{pmatrix}. \label{cForm}
\end{eqnarray}

Thus we see that we can determine some details of \(B = AC^{-1}\) as follows
\[
B = AV_CS_C^{-1}U_C =
\left(\begin{array}{cc}
a_1 \,s-\frac{c_3  \,{\left(a_2 +a_1 \,\sigma\right)}}{c} & \frac{c_1  \,{\left(a_2 +a_1 \,\sigma\right)}}{c}-\frac{a_1 \,{\left(c_1  \,s-1\right)}}{c_3  }\\
a_3 \,s-\frac{c_3  \,{\left(a_4 +a_3 \,\sigma\right)}}{c} & \frac{c_1  \,{\left(a_4 +a_3 \,\sigma\right)}}{c}-\frac{a_3 \,{\left(c_1  \,s-1\right)}}{c_3  }
\end{array}\right).
\]

It can be shown that \(a_2 + \sigma a_1 \equiv 0 \textrm{ mod } c\) and \(a_4 + \sigma a_3 \equiv 0 \textrm{ mod } c\), see the proof of Lemma~\ref{Vclemma}. 
Since we know \(a_1, a_2, a_3, a_4, c\) and \(\sigma\),
we may therefore define \(d_1 = \frac{a_2 + \sigma a_1}{c}\) and \(d_2 = \frac{a_4 + \sigma a_3}{c}\) noting that \(d_1, d_2 \in \mathbb{N}\) since \(0 \leq \sigma < |\det(C)|\) so that 
\begin{eqnarray}
B = AC^{-1} = AV_CS_C^{-1}U_C =
\left(\begin{array}{cc}
a_1 \,s-c_3 d_1 & c_1 d_1-\frac{a_1 \,{\left(c_1  \,s-1\right)}}{c_3  }\\
a_3 \,s-c_3 d_2 & c_1 d_2-\frac{a_3 \,{\left(c_1  \,s-1\right)}}{c_3  }
\end{array}\right).\label{derivation1}
\end{eqnarray}

It is useful to note at this point that the above matrix contains only three unknowns, namely \(s, c_1 \), and \(c_3 \). The rest of the proof focuses on reducing the number of unknowns to \emph{one} in the sense that once a `guess' is made for \(c_3\), then \(s\) and \(c_1\) are uniquely determined, allowing us to thus iterate over all possible values of \(c_3\). Recall that \(s \in \mathbb{Z}\) and \(a_{i}, c_1 , c_3 , d_1, d_2 \in \mathbb{N}\) for \(1 \leq i \leq 4\). If values can be found for the unknowns such that \(AC^{-1} = B \in \mathbb{N}^{2 \times 2}\) and \(C \in \mathbb{N}^{2 \times 2}\) then we have found a solution. We thus need \(AC^{-1}\) and \(C\) to be  \emph{integer} and \emph{nonnegative} (or in fact positive by Lemma~\ref{posDetNonNegLem}). 

\noindent {\bf Reducing the number of unknowns in \(B\) to one -- determining \(s\)}. Denote \(B = \big(\begin{smallmatrix} b_1 & b_2 \\ b_3 & b_4 \end{smallmatrix}\big)\). Since \(A = BC\) and \(\mu(A) = \min\{a_1, a_2, a_3, a_4\} = a_3\), then we note that \(a_3 = b_3c_1  + b_4c_3 \) and thus \(1\leq c_1 , c_3  \leq a_3 = \mu(A)\), recalling by Lemma~\ref{posDetNonNegLem} that \(b_j > 0\) and \(c_j > 0\) for \(1 \leq j \leq 4\). Therefore we iterate over each \(1 \leq c_3  \leq \mu(A)\). Fixing some such \(c_3\), in order that the left hand column of \(AC^{-1}\) is positive in Equation~(\ref{derivation1}), we have the constraints \(a_1 \,s-c_3 d_1 > 0\) and \(a_3 \,s-c_3 d_2 > 0\) with \(s\) the only unknown, and thus 
\begin{eqnarray}
s & > & \min\left(\frac{c_3 d_1}{a_1}, \frac{c_3 d_2}{a_3}\right). \label{minEq}
\end{eqnarray}
Any such choice of \(s\) satisfying this inequality yields a nonnegative integer left column of \(AC^{-1}\) and thus we consider constraints to ensure the same of the right column, recalling that \(c_3 \) is fixed. Since \(c_{i}, a_{i}, d_j \geq 0\) for \(1 \leq i \leq 4\) and \(1 \leq j \leq 2\), determining if a solution is possible for this choice of \(c_3 \) suggests that we should choose the smallest possible \(s\) according to the constraint of Equation~(\ref{minEq}), in order that the right column becomes nonnegative. However we must also satisfy that \(a_1 \left(c_1  \,s-1\right) \equiv 0 \text{ mod } c_3 \) and \(a_3 \left(c_1  \,s-1\right) \equiv 0 \text{ mod } c_3 \) in order that the elements of the right column are \emph{integers}. Let \(\delta_1 = \gcd(c_3 , a_1)\) and \(\delta_2 = \gcd(c_3 , a_3)\). Then define \(c'_{3} = \frac{c_3 }{\delta_1} \in \mathbb{N}\) and \(a'_{1} = \frac{a_1}{\delta_1} \in \mathbb{N}\) and \(c''_{3} = \frac{c_3 }{\delta_2} \in \mathbb{N}\) and \(a'_{3} = \frac{a_3}{\delta_2} \in \mathbb{N}\). Thus
\begin{eqnarray*}
a_1 c_1  \,s & \equiv & a_1 \text{ mod } c_3  \, \Rightarrow \, a'_{1} c_1  \,s  \equiv a'_{1} \text{ mod } c'_{3} \, \Rightarrow \, c_1  \,s \equiv 1 \text{ mod } c'_{3}, \\
a_3 c_1  \,s & \equiv & a_3 \text{ mod } c_3  \, \Rightarrow \, a'_{3} c_1  \,s  \equiv  a'_{3} \text{ mod } c''_{3} \, \Rightarrow \, c_1  \,s \equiv 1 \text{ mod } c''_{3}. 
\end{eqnarray*}
Thus \(c_1 s \equiv 1 \text{ mod } \text{lcm}(c'_{3}, c''_{3})\) and both \(c_1 \) and \(s\) are invertible modulo \(\text{lcm}(c'_{3}, c''_{3})\) since
\(
\gcd(c_1 , \text{lcm}(c'_{3}, c''_{3})) \leq \gcd(c_1 , c'_{3}c''_{3}) \leq \gcd(c_1 , c_3 c_3 )  = 1,
\) 
with the last equality following since \(c_1\) and \(c_3\) are coprime. Indeed, if this is not the case, then assume \(\gcd(c_1, c_3) = q > 1\) and thus 
\(C = \left(\begin{array}{cc} c_1 & c_2 \\ c_3 & c_4 \end{array}\right) = C'Q' := \left(\begin{array}{cc} \frac{c_1}{q} & c_2 \\ \frac{c_3}{q} & c_4 \end{array}\right)\left(\begin{array}{cc} q & 0 \\ 0 & 1 \end{array}\right),\)
which implies that \(A = BC = BC'Q' = A'Q'\) for some \(A' \in \mathbb{N}^{2 \times 2}\) and thus \(A\) is not adj-coprime and \(A \not\in \Xi\) contradicting our assumption. 
We can thus derive that \(c_1  \equiv s^{-1} \text{ mod } \text{lcm}(c'_{3}, c''_{3})\). We know that \(c_1  \geq 0\) since \(C \in \mathbb{N}^{2 \times 2}\). 

In order that \((AC^{-1})_{2,2} > 0\), \(c_1 d_2-\frac{a_3 \,{\left(c_1  \,s-1\right)}}{c_3 }>0\). Recall that \(s > \frac{c_3 d_2}{a_3}\), and hence \(s \geq \left\lceil\frac{c_3 d_2}{a_3}\right\rceil\) since \(s\) is an integer. We now show that if \(s > \left\lceil\frac{c_3 d_2}{a_3}\right\rceil\) then \((AC^{-1})_{2,2} \leq 0\). To see this, set \(s = \left\lceil\frac{c_3 d_2}{a_3}\right\rceil + 1\) and then note that
 \begin{eqnarray}
(AC^{-1})_{2,2} & = & c_1 \left(d_2 - \frac{a_3s}{c_3 }\right) + \frac{a_3}{c_3 } \quad\left[\text{let } s = \left\lceil\frac{c_3 d_2}{a_3}\right\rceil + 1 \text{ with } \frac{c_3 d_2}{a_3} + 1\right]\\
 & \leq & c_1 \left(d_2 - \frac{a_3c_3 d_2}{a_3c_3 } - \frac{a_3}{c_3 }\right) + \frac{a_3}{c_3 } = \frac{a_3}{c_3 } (1-c_1 ) \leq 0.
\end{eqnarray}
The last inequality holds since \(c_1 > 0\). To see this, note that \(C \in \mathbb{N}^{2 \times 2}\) and thus \(c_1 \geq 0\), and if \(c_1 = 0\), then 
\(C = \left(\begin{array}{cc} 0 & c_2 \\ c_3 & c_4 \end{array}\right) = C''Q'' := \left(\begin{array}{cc} 0 & c_2 \\ 1 & c_4 \end{array}\right)\left(\begin{array}{cc} c_3 & 0 \\ 0 & 1 \end{array}\right),\) 
in which case \(A = BC = BC''Q'' = A''Q''\) for some \(A'' \in \mathbb{N}^{2 \times 2}\) and thus \(A\) is not adj-coprime and \(A \not\in \Xi\) contradicting our assumption. Now, if \(c_1 = 1\), then \((AC^{-1})_{2,2} = b_4 = 0\), but this is a contradiction since \(b_j > 0\) for \(1 \leq j \leq 4\) as proven in Lemma~\ref{posDetNonNegLem}. Therefore, once \(c_3 \) is defined, then \(s\) is \emph{uniquely} determined in a possible solution as \(s = \left\lceil\frac{c_3 d_2}{a_3}\right\rceil\). 

We may apply a similar derivation by using the fact that \((AC^{-1})_{1,2} > 0\) and thus determine that \(s = \left\lceil\frac{c_3 d_1}{a_1}\right\rceil\).  Note then that \(\left\lceil\frac{c_3 d_2}{a_3}\right\rceil = \left\lceil\frac{c_3 d_1}{a_1}\right\rceil\), otherwise no solution exists since then 
one of the elements of \(AC^{-1}\) is necessarily negative.

\noindent {\bf A simpler expression for matrix \(B\)}. We may then denote \(s = \left\lceil\frac{c_3 d_1}{a_1}\right\rceil = \frac{c_3 d_1}{a_1} + \varepsilon_1\) and \(s = \left\lceil\frac{c_3 d_2}{a_3}\right\rceil = \frac{c_3 d_2}{a_3} + \varepsilon_2\) where \(0 \leq \varepsilon_1, \varepsilon_2 < 1\). 
Thus we may write Equation~(\ref{derivation1}) as:
{\footnotesize \begin{eqnarray} AC^{-1} & = & 
    \left(\begin{array}{cc}
a_1 \,s-c_3 d_1 & c_1 d_1-\frac{a_1 \,{\left(c_1  \,s-1\right)}}{c_3  }\\
a_3 \,s-c_3 d_2 & c_1 d_2-\frac{a_3 \,{\left(c_1  \,s-1\right)}}{c_3  }
\end{array}\right) \nonumber \\ & = & 
\left(\begin{array}{cc}
a_1 \!\left(\frac{c_3 d_1}{a_1} + \varepsilon_1\right)-c_3 d_1 & c_1 d_1-\frac{a_1 \!{\left(c_1  \!\left(\frac{c_3 d_1}{a_1} + \varepsilon_1\right)-1\right)}}{c_3  }\\
a_3 \!\left(\frac{c_3 d_2}{a_3} + \varepsilon_2\right)-c_3 d_2 & c_1 d_2-\frac{a_3 \!{\left(c_1  \!\left(\frac{c_3 d_2}{a_3} + \varepsilon_2\right)-1\right)}}{c_3  }
\end{array}\right) = 
\left(\begin{array}{cc}
a_1 \,\varepsilon_1 & \frac{a_1(1-c_1 \varepsilon_1)}{c_3 }  \\
a_3 \,\varepsilon_2 & \frac{a_3(1-c_1 \varepsilon_2)}{c_3 } 
\end{array}\right).\label{bForm}
\end{eqnarray}}
Note that \(\varepsilon_1, \varepsilon_2\) are dependent upon \(c_3 \). In fact, \(\varepsilon_1 = 1 - \frac{c_3 d_1 \text{ mod }{a_1}}{a_1}\) and \(\varepsilon_2 = 1 - \frac{c_3 d_2 \text{ mod }{a_3}}{a_3}\). 

To conclude then, we may note that we have derived matrices \(B\) and \(C\) in Equations~(\ref{bForm}) and (\ref{cForm}) respectively to have the following forms:
\[
B = \left(\begin{array}{cc}
a_1 \,\varepsilon_1 & \frac{a_1(1-c_1 \varepsilon_1)}{c_3 }  \\
a_3 \,\varepsilon_2 & \frac{a_3(1-c_1 \varepsilon_2)}{c_3 } 
\end{array}\right), \quad
C = \begin{pmatrix} c_1  & \frac{\det(C)(c_1 s - 1)}{c_3 } - c_1 \sigma \\ c_3  & \det(C)s - c_3 \sigma \end{pmatrix}.
\]
Note that \(a_1, a_3, \det(C), \sigma \in \mathbb{N}\) are fixed, with \(\det(C)\) given as in the theorem statement and \(\sigma\) derivable by Lemma~\ref{Vclemma}.  We may observe  that for each choice of \(1 \leq c_3  \leq \mu(A)\), then \(s = \left\lceil\frac{c_3 d_2}{a_3}\right\rceil  \in \mathbb{N}\) is uniquely determined as previously derived, as are \(\varepsilon_1, \varepsilon_2 \in \mathbb{Q}_{+}\), noting also that \(a_1\varepsilon_1 \in \mathbb{N}\) and \(a_3\varepsilon_2 \in \mathbb{N}\). The only remaining free variable is thus \(c_1 \). We must thus determine, for a particular value of \(1 \leq c_3  \leq \mu(A)\) with \(c_3  \in \mathbb{N}\), if there exists a value of \(c_1  \geq 0\) for which \(B, C \in \mathbb{N}^{2 \times 2}\). This requires \(B\) and \(C\) be both nonnegative and integer. 

\noindent {\bf Enforcing nonnegativity}. For nonnegativity, we may note informally that \(B\) tends to be nonnegative when \(c_1  \geq 0\) is small, whereas \(C\) tends to be nonnegative when \(c_1  \geq 0\) is large, giving a range of values of \(c_1 \) for which both \(B\) and \(C\) are nonnegative (we ignore integrality for the moment). More formally, let \(\xi_1 = \left\lfloor{\min{\left\{\frac{a_1}{\varepsilon_1}, \frac{a_3}{\varepsilon_2}\right\}}}\right\rfloor\) and note that \(b_j \geq 0\) for \(1 \leq j \leq 4\) and \(0 \leq c_1  \leq \xi_1\). Similarly, in order that \(C\) be nonnegative, define \(\xi_2 = \left\lceil \frac{\det(C)}{\det(C)s-\sigma c_3 }\right\rceil\). Thus, for each \(c_3  \in \mathbb{N}\), both \(B\) and \(C\) are nonnegative if \(\xi_1 \leq c_1  \leq \xi_2\). Note that this range can be empty, implying that both matrices cannot simultaneously be made nonnegative for the current choice of \(c_3 \).

\noindent {\bf Enforcing integrality}. We finally determine which values of \(c_1  \geq 0\) give integer \(B, C\) matrices, ignoring nonnegativity for the moment. There are only three elements of \(B\) and \(C\) which can be rational, specifically, \(b_2, b_4\), and \(c_2\). Each such entry gives a linear congruence relation for \(c_1\) modulo \(c_3\), recalling again that \(a_1\varepsilon_1, a_3\varepsilon_2 \in \mathbb{N}\). For example \(b_2 \in \mathbb{Z}\) implies that
\(
 \frac{a_1(1-c_1 \varepsilon_1)}{c_3 } \in \mathbb{Z}\) and therefore \((a_1\varepsilon_1) c_1 \equiv a_1 \mod c_3 .\)

We can decide in polynomial time if all three congruences have a common solution. If no solution exists, the current choice of \(c_3\) does not allow \(B\) and \(C\) to be simultaneously integer.

Otherwise, we can determine in polynomial time if this system has a solution \(c_1  \in \mathbb{N}\) satisfying \(\xi_1 \leq c_1  \leq \xi_2\). If the answer is positive, then we have found a solution pair \((c_1 , c_3 )\) for which \(B, C \in \mathbb{N}^{2 \times 2}\). Otherwise, we move on to the next value of \(c_3\). Since checking if \(c_1 \) exists for a given \(c_3 \) takes polynomial time, our procedure runs in time \(\tilde{O}(\mu(A))\).
\end{proof}

We have now proven our main result: given \(A \in \Xi \subseteq \mathbb{N}^{2 \times 2}\) with a composite determinant and a guess at the (positive, due to Lemma~\ref{posDetNonNegLem}) determinant of \(\det(C)\) such that \(\det(A) = \det(B)\det(C)\), then we can determine if nonunit matrices \(B, C \in \mathbb{N}^{2 \times 2}\) exist (and compute one such pair) where \(A=BC\) in time \(\tilde{O}(\mu(A))\). Of course the natural question is then to consider what happens when the guess at \(\det(C)\) is \emph{not} given. However we note that even in the $1$-dimensional case, given \(a \in \mathbb{N}\), factoring \(a\) is not a trivial task, and is not known to be possible in P; indeed the fastest algorithms require super-polynomial (but sub-exponential) time complexity (although determining if a particular integer is prime is in P \cite{agrawal2004primes}). Since our first consideration may be to at least factor the determinant of \(A \in \mathbb{N}^{2 \times 2}\), it seems natural that a polynomial-time algorithm does not appear easy. We can of course apply Theorem~\ref{singleDetLem} for each possible determinant of \(C\) such that \(\det(A) = \det(B)\det(C)\) which gives an efficient procedure for each such choice of determinant, which provides the following corollary. 

\begin{corollary}\label{finalCor}
    Given a matrix \(A \in \Xi\),
     we can determine if \(A\) is prime in time \(O(\gamma(A)^{\frac{2}{5}}) + \tilde{O}\left(\mu(A) \sqrt{e^{(1+o(1))\sqrt{\ln \gamma(A) / \ln \ln \gamma(A)}}}\right)\), where \(\mu(A)\) and \(\gamma(A)\) denote the minimal and maximal elements of \(A\) respectively.
\end{corollary}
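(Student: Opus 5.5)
The plan is to reduce primality of \(A\in\Xi\) to a sequence of calls to Theorem~\ref{singleDetLem}, one for each candidate value of \(\det(C)\). The algorithm has two phases. First, a deterministic factorisation of \(\det(A)\) costs \(O(\gamma(A)^{2/5})\). Second, we loop over the divisors of \(|\det(A)|\), at a cost of \(\tilde{O}(\mu(A))\) per divisor, and multiply by a bound on the number of divisors.

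\textbf{Preprocessing and sign.} Since the entries are nonnegative, \(|\det(A)| = |a_1a_4-a_2a_3| \le \max(a_1a_4,a_2a_3)\le \gamma(A)^2\).
\begin{itemize}
\item If \(\det(A)\in\{0,\pm1\}\), Proposition~\ref{prop:singular/unimodular matrices} decides primality in polynomial time.
\item By Lemma~\ref{posDetNonNegLem}, any factorisation \(A=BC\) can be replaced by \((BJ)(JC)\), so we may assume \(\det(C)=c>0\), hence \(c \mid |\det(A)|\).
\item The first part of the proof of Theorem~\ref{singleDetLem} shows that \(c=1\) and \(c=|\det(A)|\) admit only trivial factorisations. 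So \(A\) is composite if and only if some divisor \(c\) of \(|\det(A)|\) with \(1<c<|\det(A)|\) yields non-units \(B,C\in\mathbb{N}^{2\times2}\) with \(\det(C)=c\) and \(A=BC\).
\end{itemize}
The theorem is stated for \(c\mid\det(A)\), and the sign of \(\det(B)=\det(A)/c\) is then forced. So each call to Theorem~\ref{singleDetLem} decides exactly one candidate.

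\textbf{Factoring the determinant.} We factor \(n=|\det(A)|\le\gamma(A)^2\) completely, using a deterministic integer factorisation algorithm with running time \(O(n^{1/5})\), up to polylogarithmic factors (e.g.\ the Hittmeir/Harvey-type improvements of Pollard--Strassen). Applied to the largest remaining cofactor at each step, this gives total cost \(O(\gamma(A)^{2/5})\), which is the first summand of the claimed bound.

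From the prime factorisation \(n=\prod p_i^{e_i}\) we list all divisors \(c\) with \(1<c<n\). Each divisor is produced in time polynomial in \(\log n\), so the listing costs \(\tilde{O}(d(n))\), where \(d(n)\) is the number of divisors. This is dominated by the second phase. If \(d(n)=2\), then \(n\) is prime and we may alternatively invoke Theorem~\ref{HmainThm} directly.

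\textbf{Main loop and counting divisors.} For each proper divisor \(c\) we run the procedure of Theorem~\ref{singleDetLem} in time \(\tilde{O}(\mu(A))\). If it ever returns \(B,C\), we output ``composite'' together with the factorisation; otherwise we declare \(A\) prime. Correctness follows from the reduction above, so the total cost of the second phase is \(\tilde{O}(\mu(A)\, d(n))\).

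The step I expect to be the main obstacle is rewriting \(d(n)\) in the precise form stated, namely
\[
\tilde{O}\Bigl(\mu(A)\sqrt{e^{(1+o(1))\sqrt{\ln\gamma(A)/\ln\ln\gamma(A)}}}\Bigr).
\]
I would start from the classical bound \(d(n)\le 2^{(1+o(1))\ln n/\ln\ln n}\), substitute \(n\le\gamma(A)^2\) so that \(\ln n\le 2\ln\gamma(A)\), and absorb the constants into the \(1+o(1)\) exponent. If the divisor bound alone does not reproduce the inner square root, I would instead check whether the exponent comes from a subexponential-type estimate tied to the factoring step. In that case I would state clearly which known bound on \(d(n)\) yields the displayed expression. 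The rest of the argument is routine bookkeeping.

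Adding the two phases gives the complexity asserted in Corollary~\ref{finalCor}.
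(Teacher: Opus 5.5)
Your reduction is the same as the paper's: factor \(|\det(A)|\le\gamma(A)^2\) with the deterministic exponent-\(\tfrac15\) algorithm, then call Theorem~\ref{singleDetLem} once per divisor. The gap is the final step, which you leave open, and the route you sketch for it cannot work. From \(d(n)\le 2^{(1+o(1))\ln n/\ln\ln n}\) and \(n\le\gamma(A)^2\) you get the second term \(\tilde{O}\bigl(\mu(A)\,2^{(2+o(1))\ln\gamma(A)/\ln\ln\gamma(A)}\bigr)\). Its exponent exceeds the target \((\tfrac12+o(1))\sqrt{\ln\gamma(A)/\ln\ln\gamma(A)}\) by a factor tending to infinity. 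No adjustment of the \(1+o(1)\), and no polylogarithmic slack in \(\tilde{O}\), bridges that gap. The paper closes this step with two ingredients you do not have. First, it quotes the divisor bound \(d(x)=O\bigl(e^{(1+o(1))\sqrt{\ln x/\ln\ln x}}\bigr)\), attributed to Wigert; this is a divisor bound, not a factoring-cost term as you guessed it might be. Second, it credits the outer square root to restricting to \(\det(C)<\det(B)\), using \(A=BC\Rightarrow A^T=C^TB^T\).

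Your suspicion is justified, though, and you should not try to reproduce those two steps. Wigert's theorem is exactly the bound you quoted, and it is sharp. For \(n\) the product of the first \(k\) odd primes, \(d(n)=2^k\) while \(\ln n/\ln\ln n\sim k\), so the square-root form of the divisor bound is false. Such \(n\) even arise as \(|\det(A)|\) for \(A=\big(\begin{smallmatrix} x & x+1\\ x+1 & x\end{smallmatrix}\big)\in\Xi\) with \(2x+1=n\). Restricting to divisors \(c<\sqrt{|\det(A)|}\) also does not give a square root. It removes only about half of the divisors, and you must then run the procedure on \(A^T\) as well, so it changes the count by at most a constant factor. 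Hence neither your argument nor the paper's yields the displayed expression. What this divisor-enumeration algorithm provably achieves is \(O(\gamma(A)^{2/5})\) (up to the factoring algorithm's polylogarithmic factors) plus \(\tilde{O}\bigl(\mu(A)\,2^{(2+o(1))\ln\gamma(A)/\ln\ln\gamma(A)}\bigr)\). The latter is still \(\tilde{O}\bigl(\mu(A)\,\gamma(A)^{o(1)}\bigr)\). The stated bound cannot be obtained by multiplying the per-divisor cost by any valid bound on \(d(|\det(A)|)\). You should either state and prove this weaker bound, or find an argument that avoids paying \(\tilde{O}(\mu(A))\) for every divisor.
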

\begin{proof}
If \(A \in \Xi\) is not prime, then \(A = BC\), where \(B, C \in \mathbb{N}^{2 \times 2}\) are non units, i.e., different from \(I\) or \(J = \big(\begin{smallmatrix} 0 & 1 \\ 1 & 0 \end{smallmatrix}\big)\). The factorization of an integer \(x\) can be found in \(O(x^{\frac{1}{5}})\) \cite{Ha21}. Clearly \(\det(A) \leq \gamma(A)^2\) and thus we can find the factorization of \(\det(A)\) in \(O(\gamma(A)^{\frac{2}{5}})\), giving the first term of the complexity bound.

Denote by \(d(\det(A))\), the number of \emph{divisors} of \(\det(A)\). It is known that the number of divisors of  any integer \(x\) is bounded by \(O\left(e^{(1+o(1))\sqrt{\ln x / \ln \ln x}}\right)\) \cite{Wi07}. We therefore find the factorization of \(\det(A)\) and then iterate over all possible divisors, applying Theorem~\ref{singleDetLem} for each such choice of \(\det(C)\). In fact we may assume that \(\det(C) < \det(B)\) since we note that \(X\) is prime if and only if \(X^T\) is prime and \(X = BC\) implies \(X^T = C^T B^T\), therefore giving the second term of the stated complexity bound. 

We may note then that our algorithm is particularly efficient when the number of divisors of \(\det(A)\), or else \(\mu(A)\), is small. 
\end{proof}

\section{Conclusion}
In this paper we provide a first efficient algorithm to determine if a matrix from \(\mathbb{N}^{2 \times 2}\) is prime and to find a non-trivial factorisation if the matrix is composite. There are links between this work and \emph{incidence matrices for morphisms} \cite{Honkala24}.

We may note that our main results, i.e., the derivation  of time complexity \(\tilde{O}(\mu(A))\) in Theorem~\ref{singleDetLem} alongside Corollary~\ref{finalCor} for proving primality, are still exponential in the (binary) representation size of \(A\). Determining whether the problem can be solved in time polynomial in the representation size of \(A\), or else proving that the problem is NP-hard is a logical next step, alongside identifying if there is  way to circumvent the requirement to factorise \(\det(A)\). Recall again that determining if an integer is composite is achievable in polynomial time in the binary representation of the integer \cite{agrawal2004primes}, however our setting is non-abelian, which intuitively appears to make the problem more difficult.

\bibliographystyle{splncs04}
\bibliography{references}

\end{document}